\newtheorem{theorem}{Theorem}
\newtheorem{corollary}{Corollary}
\newtheorem{lemma}{Lemma}
\newtheorem{remark}{Remark}
\DeclareMathOperator{\lcm}{lcm}
\def\old@comma{,}
    \old@comma\discretionary{}{}{}%
\begin{document}

\allowdisplaybreaks
\title{The Asymptotic Capacity of $X$-Secure\\$T$-Private Linear Computation with\\Graph Based Replicated Storage}

\author{
    Haobo~Jia
    and Zhuqing~Jia,~\IEEEmembership{Member,~IEEE}
    \thanks{H. Jia and Z. Jia are with the School of Artificial Intelligence, Beijing University of Posts and Telecommunications, Beijing, 100086 China (e-mail: jiahaobo@bupt.edu.cn; zhuqingj@bupt.edu.cn).}
    \thanks{This work was presented in part at IEEE ISIT 2023 in \cite{doubleJia_XSTPLC}.}
}

\maketitle

\begin{abstract}
The problem of $X$-secure $T$-private linear computation with graph based replicated storage (GXSTPLC) is to enable the user to retrieve a linear combination of messages privately from a set of $N$ distributed servers where every message is only allowed to store among a subset of servers (in the form of secret shares) subject to an $X$-security constraint, i.e., any groups of up to $X$ colluding servers must reveal nothing about the messages. Besides, any groups of up to $T$ servers cannot learn anything about the coefficients of the linear combination retrieved by the user. In this work, we completely characterize the asymptotic capacity of GXSTPLC, i.e., the supremum of average number of desired symbols retrieved per downloaded symbol, in the limit as the number of messages $K$ approaches infinity. Specifically, it is shown that a prior linear programming based upper bound on the asymptotic capacity of GXSTPLC due to Jia and Jafar is tight (thus settles their conjecture) by constructing achievability schemes. Notably, our achievability scheme also settles the exact capacity (i.e., for finite $K$) of $X$-secure linear combination with graph based replicated storage (GXSLC). Our achievability proof builds upon an achievability scheme for a closely related problem named asymmetric $\mathbf{X}$-secure $\mathbf{T}$-private linear computation with graph based replicated storage (Asymm-GXSTPLC) that guarantees non-uniform security and privacy levels across messages and coefficients (of the desired linear combination). In particular, by carefully designing Asymm-GXSTPLC settings for GXSTPLC problems, the corresponding Asymm-GXSTPLC schemes can be reduced to asymptotic capacity achieving schemes for GXSTPLC. In regard to the achievability scheme for Asymm-GXSTPLC, interesting aspects of our construction include a novel query and answer design which makes use of a Vandermonde decomposition of Cauchy matrices, and a trade-off among message replication, security and privacy thresholds.
\end{abstract}

\section{Introduction}
Motivated by security and privacy concerns in prevalent distributed storage systems, private information retrieval (PIR) \cite{PIRfirstjournal,Yekhanin,Sun_Jafar_PIR} and private linear computation (PLC) \cite{Sun_Jafar_PC} are paradigms that allow a user to retrieve a message out of $K$ messages (a linear combination of $K$ messages) from a set of $N$ distributed servers without revealing any information about the index of the desired message (the coefficients of the desired linear combination) to any one of the servers. Recent interest in PIR and PLC from an information-theoretic perspective has successfully characterized a series of capacity results in terms of the download rate (i.e., the supremum of average number of desired symbols retrieved per downloaded symbol) for PIR and PLC under various constraints, e.g., 
PIR/PLC with $T$-privacy\cite{Sun_Jafar_TPIR,Sun_Jafar_PC,Mirmohseni_Maddah_ws,Lu_Jia_Jafar_DB,Raviv_Karpuk,Chen_Wang_Jafar_Search_trans}, 
PIR with arbitrary collusion pattern\cite{Tajeddine_Gnilke_Karpuk_Etal,Yao_Kang_Wei_APIR,Cheng_Kang_Wei_SAPIR}, 
PIR/PLC with coded storage\cite{Banawan_Ulukus,FREIJ_HOLLANTI,Sun_Jafar_MDSTPIR,Karpuk_David,Obead_Sarah_Kliewer,Tajeddine_Gnilke_Karpuk_Hollanti,Tajeddine_Rouayheb,Wang_Skoglund_MDSPIR,Obead_Sarah_PLC,Obead_Sarah_Lin_PPC}, 
multi-message PIR\cite{Banawan_Ulukus_MPIR_trans,wang2021private,Shariatpanahi_Siavoshani_Maddah_trans}, 
PIR with constrained storage\cite{Banawan_Karim_Arasli_HCPIR,Attia_Kumar_Tandon_trans,Zhu_Jin_scPIR}, 
PIR with side-information\cite{Tandon_CPIR_conf,Kadhe_Garcia_Heidarzadeh_Rouayheb_Sprintson_trans,Wei_Banawan_Ulukus,Wei_Banawan_Ulukus_Side_trans,Chen_Wang_Jafar_Side_trans,wei2019capacity}, 
multi-round PIR\cite{Sun_Jafar_MPIR,Yao2019MPIR}, 
PIR against eavesdroppers\cite{Wang_Sun_Skoglund_trans,Wang_Skoglund_SPIREve_conf,Banawan_Ulukus_Asymmetric_trans}, 
PIR/PLC with (MDS coded) $X$-secure storage\cite{Yu_Lagrange,Yang_Shin_Lee,Jia_Sun_Jafar_XSTPIR,Jia_Jafar_MDSXSTPIR,Jia_Jafar_GXSTPIR,Jia_Jafar_XSTPFSL,cheng2023asymptotic,doubleJia_XSTPLC}, 
symmetric secure PIR\cite{Sun_Jafar_SPIR,Zhu_Yan_Tang,Wang_Skoglund_SPIRAd_conf} 
and PIR/PLC with other models/constraints\cite{Banawan_Ulukus_Byzantine_trans,Banawan_Ulukus_Traffic_trans,Shah_Rashmi_Kannan,Zhang_Wang_Wei_Ge_trans,Sun_Jafar_PIRL,Jia_Sun_Jafar,Raviv_Tamo_Yaakobi_trans,Tian_Sun_Chen_PIR_trans,Sadeh_Bar_Gu,Banawan_ulukus_GPIR,keramaati2020private}.
One of the common yet restrictive assumptions in most information-theoretic PIR/PLC literature is the global data availability, i.e., each message is allowed to be stored at all of the $N$ servers in either plain replicated form or (securely) coded form, which could be cumbersome in some of the real-world applications.

In this work, we consider the problem of $X$-secure $T$-private linear computation with graph based replicated storage (GXSTPLC). In a graph based replicated storage model, the $K$ messages are partitioned into $M$ subsets and the messages of each subset are only allowed to `replicate' among a subset of servers (in either plain replicated form or certain securely coded form if data security is necessary). Recall that this allows for a (hyper)graph representation where each server is represented by a vertex and each message set is represented by a (hyper)edge which consists of the servers where the message set is allowed to `replicate', and this is referred to as {\it storage pattern}. There is no shortage of such scenarios where global data availability is not applicable. As shown in Figure \ref{fig:1} where multiple sources distribute their messages to the servers. However, due to a variety of constraints such as geographic blocking, network connectivity and data security, each source is only possible to distribute its messages to a subset of servers. Meanwhile, the user wishes to retrieve a linear combination of all of the $K$ messages without disclosing any information about the coefficients of the linear combination to any group of up to $T$ colluding servers. Besides, any groups of up to $X$ colluding servers are not allowed to reveal any information about the messages.

\begin{figure}[!h]
\centering
\includegraphics[scale=0.5]{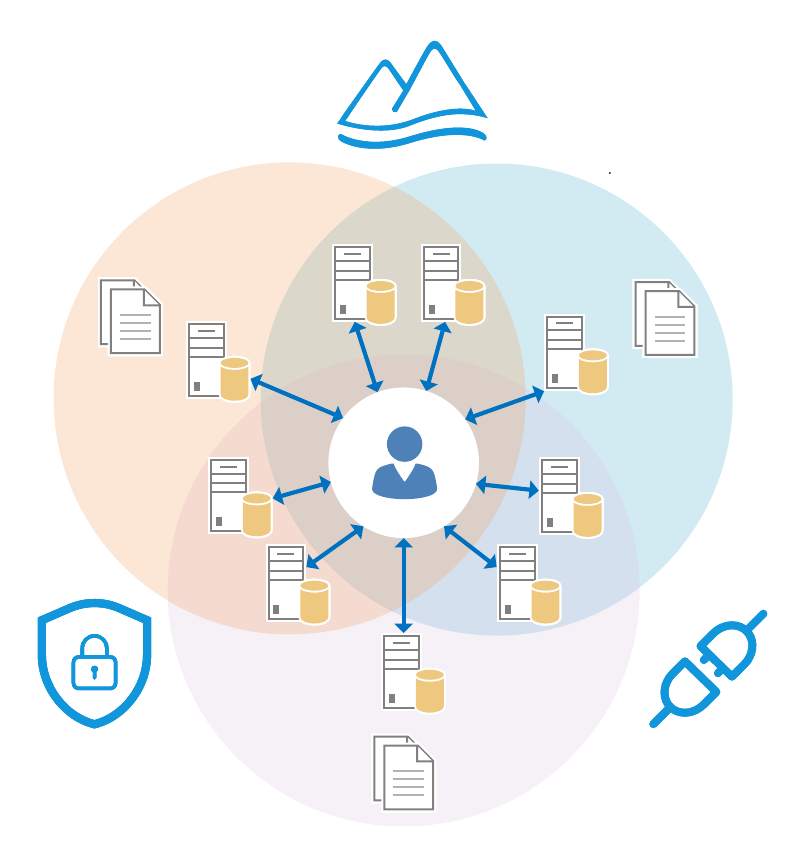}
\caption{Non-uniform data availability due to various constraints.}
\label{fig:1}
\end{figure}

There are several prior works on the problem of TPIR with graph based replicated storage (GTPIR), that is, a special information retrieval setting of our problem where $X=0$. For example, Raviv et al. \cite{Raviv_Tamo_Yaakobi_trans} proposed an achievability scheme that achieves the PIR rate of $R=1/N$ as long as each message is replicated at least $T+1$ times. Banawan et al. \cite{Banawan_ulukus_GPIR} considered a dual setting where $T=1$ and every server stores exactly $2$ messages. Sadeh et al. \cite{Sadeh_Bar_Gu} investigated upper and/or lower bounds on the capacity of GPIR (i.e., GTPIR where $T=1$) for several specific graph families by exploiting graph structures. Most relevant to this work is the problem of $X$-secure $T$-private information retrieval with graph based replicated storage (GXSTPIR) considered by Jia et al. \cite{Jia_Jafar_GXSTPIR}. In particular, a general achievability scheme for GXSTPIR is presented and the asymptotic capacity (i.e., for large $K$) of GTPIR is partially settled for the settings where each message is replicated no more than $T+2$ times. Indeed, the problem of GXSTPLC considered in this work can be viewed as a linear computation extension of GXSTPIR. However, the general capacity of both the two problems remains largely open.

The main contribution of this work is the complete asymptotic capacity characterization for the problem of GXSTPLC, also the problem of GXSTPIR as an immediate corollary (recall that in general PLC schemes can be used to construct PIR schemes while achieving the same rate). This also settles the conjecture on the asymptotic capacity of GXSTPIR in \cite{Jia_Jafar_GXSTPIR}, i.e., our result shows that the upper bound on the asymptotic capacity of GXSTPIR in \cite{Jia_Jafar_GXSTPIR} is tight by presenting matching achievability schemes. As pointed out in \cite{Jia_Jafar_GXSTPIR}, the characterization of the asymptotic capacity is quite meaningful for PIR/PLC problems because PIR/PLC capacities tend to converge to their asymptotic values exponentially (in the number of messages $K$) so that even for mildly large $K$ the gap is vanishingly small. Furthermore, our achievability scheme recovers many known asymptotic capacity results of various PIR/PLC problems as special cases including PIR/PLC, $T$-private PIR/TPLC, $X$-secure $T$-private PIR/PLC, thus can be viewed as a multidimensional generalization of these problems. The key to the achievability proof of the capacity of GXSTPLC turns out to be a closely related problem named \emph{asymmetric} $\mathbf{X}$-secure $\mathbf{T}$-private linear computation with graph based replicated storage (Asymm-GXSTPLC), where instead of the same privacy/security level on the coefficients/messages across message sets, non-uniform privacy and security thresholds are required by \emph{asymmetric} $\mathbf{X}$-security and $\mathbf{T}$-privacy (hence boldface $\mathbf{X}$ and $\mathbf{T}$ indicating tuples consisting of various security and privacy thresholds across message sets). Specifically, for any given GXSTPLC problem where the storage pattern is determined, we carefully construct a corresponding \emph{asymmetric} setting that can be reduced to the original GXSTPLC problem and achieves the same PLC rate. More importantly, we show that it is possible to achieve the rate that matches the asymptotic capacity of the GXSTPLC problem for the corresponding \emph{asymmetric} setting by constructing achievability schemes. The main technical novelty in this regard is that based on the idea of {\it cross subspace alignment} and a structure inspired by dual generalized Reed-Solomon (GRS) codes which are already developed as key components to the construction of a general achievability scheme of GXSTPIR in \cite{Jia_Jafar_GXSTPIR}, we introduce a novel construction of queries and answer symbols that is inspired by a Vandermonde decomposition of Cauchy matrices. This allows the alignment of the symbols of the desired linear combination {\it across message sets}, making it possible to retrieve the desired linear combination correctly. Besides, our achievability scheme for the asymmetric problem reveals an interesting trade-off among replication factors (i.e., the number of servers where the message set is allowed to store), security and privacy thresholds.

One interesting aspect that distinguishes our result from prior works is that although graph based replicated storage patterns naturally admit (hyper)graph representations, our achievability proof barely involves graph theory techniques or arguments. Instead, since the asymptotic capacity of GXSTPLC is presented in the form of a linear programming whose feasible region is a function of the storage pattern, our achievability proof depends mainly on the appropriate construction of the corresponding asymmetric setting, the achievability scheme for the asymmetric problem, and the property of the linear programming itself. It is hence of border interest to see if this methodology is instructive for finite $K$ settings. Another noteworthy result is the characterization of the exact (i.e., non-asymptotic) capacity of the problem of $X$-secure linear computation with graph based replicated storage (GXSLC), i.e., a special setting of GXSTPLC where $T=0, X\geq 1$ and the coefficients are non-zero. We refer the reader to Remark \ref{remark:gxslc} for details. 

The remainder of this paper is organized as follows. We formally define the problem of GXSTPLC in Section \ref{sec:ps}, and the main result is presented in Section \ref{sec:result}. Section \ref{sec:achi} is dedicated to the formal proof of our main result along with observations and illustrative examples. Finally, we conclude the paper in Section \ref{sec:conclusion}.

\noindent {\it Notations:} Bold symbols are used to denote vectors, matrices and tuples, while calligraphic symbols denote sets. By convention, let the empty product be the multiplicative identity, and the empty sum be the additive identity. $\mathbb{Z}_{\geq 0}$ denotes the set $\{0,1,2,\cdots\}$. For any positive integer $N$, $[N]$ denotes the index set $\{1,2,\cdots, N\}$. For an index set $\mathcal{I}\subset[N]$ and a set of random variables $\mathcal{X}=\{X_1, X_2, \cdots, X_N\}$, $\mathcal{X}_\mathcal{I}$ denotes the set $\{X_i \mid i\in\mathcal{I}\}$. For arbitrary $x\in\mathbb{R}$, $(x)^+$ denotes $\max(0,x)$.

\section{Problem Statement}\label{sec:ps}
\begin{figure}
\centering
\includegraphics[scale=0.62]{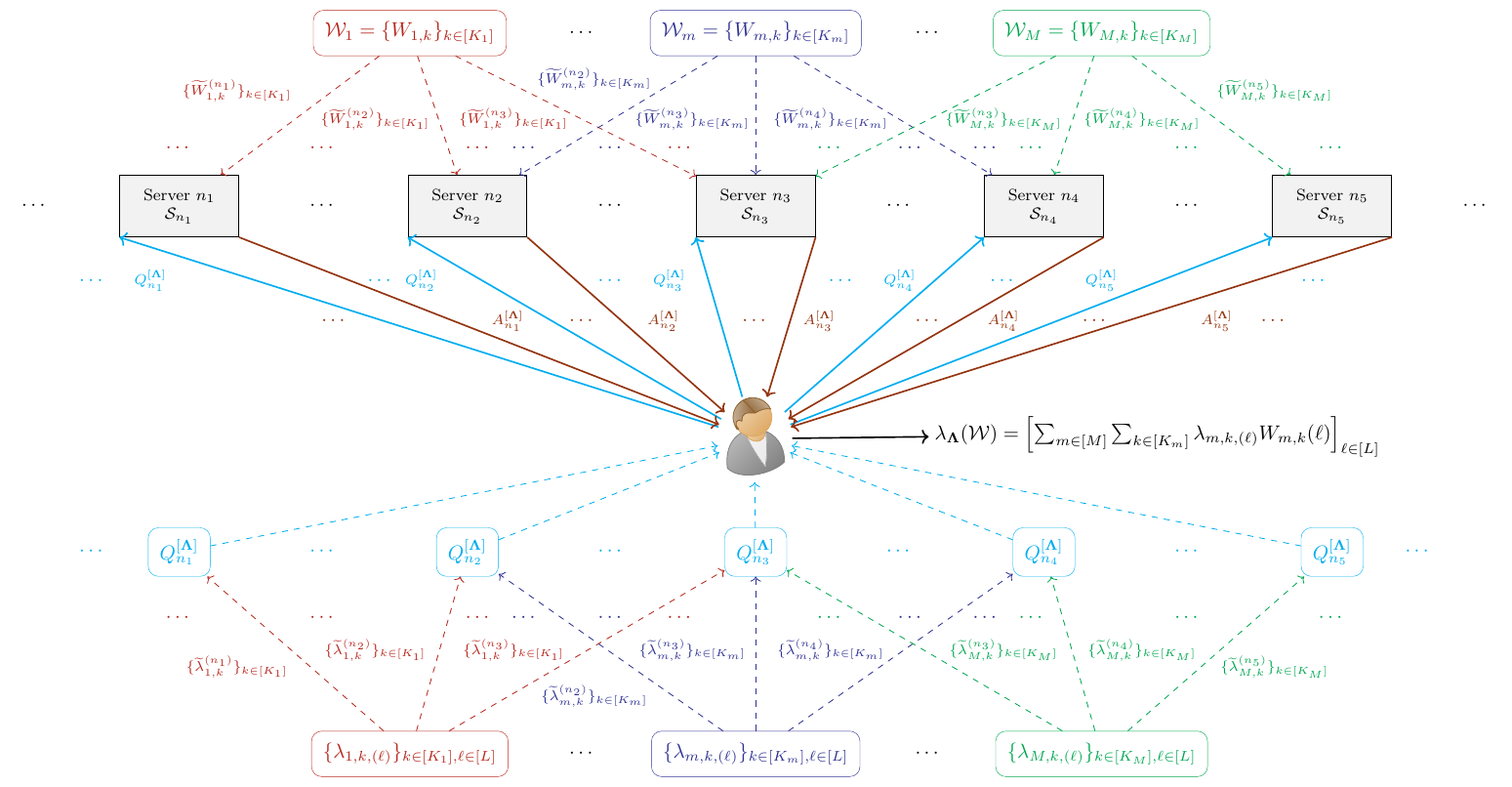}
\caption{The problem of $X$-secure $T$-private linear computation with graph based replicated storage.}
\label{fig:gxstplc}
\end{figure}
Consider the private linear computation problem as shown in Figure \ref{fig:gxstplc} with $N$ servers ($1,2,\cdots,N$) and $K$ independent messages. The set of the messages $\mathcal{W}$ is partitioned into $M$ disjoint subsets $\mathcal{W}_1,\mathcal{W}_2,\cdots,\mathcal{W}_M$, that is, $\mathcal{W}=\bigcup_{m\in[M]} \mathcal{W}_m, \mathcal{W}_i\cap\mathcal{W}_j=\emptyset,\ \forall i\neq j,\ i,j\in[M]$. For each $m\in[M]$, the subset $\mathcal{W}_m$ is comprised of $K_m$ messages, i.e., $\mathcal{W}_m=\{W_{m,1},W_{m,2},\cdots,W_{m,K_m}\}$. Each of the messages consists of $L$ i.i.d. uniform symbols from a finite field $\mathbb{F}_q$, and is represented as an $L$-dimensional vector, i.e., for all $m\in[m], k\in[K_m]$, $W_{m,k}=(W_{m,k}(1),W_{m,k}(2),\cdots, W_{m,k}(L))$, and
\begin{align}
    H((W_{m,k})_{m\in[M],k\in[K_m]})=KL,
\end{align}
in $q$-ary units. To characterize the storage pattern, i.e., the pattern that the $M$ message sets are `replicated' among the $N$ servers, let us define $M$ sets of servers, corresponding to the $M$ sets of messages, i.e., for all $m\in[M]$, we define
\begin{align}
    \mathcal{R}&=\{\mathcal{R}_1,\mathcal{R}_2,\cdots,\mathcal{R}_M\},\\
    \mathcal{R}_m&=\{\mathcal{R}_m(1),\mathcal{R}_m(2),\cdots,\mathcal{R}_m(\rho_m)\},0<\rho_m\le N,
\end{align}
where the messages $\mathcal{W}_m$ are allowed to `replicate' among the servers $\mathcal{R}_m$. $\rho_m$ is referred to as \emph{replication factor} of the $m^{th}$ message set.
Note that we can equivalently define the dual representation of the storage pattern, i.e., for all $n\in[N]$, Server $n$ is allowed to store a `replicated' copy of the messages $\mathcal{W}_m, m\in\mathcal{M}_n$, where $\mathcal{M}_n=\{m\in[M]\mid\mathcal{R}_m\ni n\}$.

Let us further elaborate on the above notations via the following example. Assume that we have $M=4$ message sets $\mathcal{W}_1, \mathcal{W}_2, \mathcal{W}_3, \mathcal{W}_4$ that are stored at $N=7$ servers as shown in the following table.
    \begin{center}
        \begin{tabular}{cccccccc}%
            \hline &Server $1$&Server $2$&Server $3$&Server $4$&Server $5$&Server $6$&Server $7$\\\hline 
            &$\mathcal{W}_1,\mathcal{W}_3$&$\mathcal{W}_1,\mathcal{W}_2,\mathcal{W}_4$&$\mathcal{W}_2,\mathcal{W}_4$&$\mathcal{W}_1,\mathcal{W}_2,\mathcal{W}_3$&$\mathcal{W}_2,\mathcal{W}_4$&$\mathcal{W}_2,\mathcal{W}_4$&$\mathcal{W}_3$\\\hline
        \end{tabular}
    \end{center}
For this example, we have
\begin{subequations}
\begin{align}
    \mathcal{R}_1&=\{1,2,4\}, &\rho_1&=3 &
    \mathcal{R}_2&=\{2,3,4,5,6\}, &\rho_2&=5\\
    \mathcal{R}_3&=\{1,4,7\}, &\rho_3&=3&
    \mathcal{R}_4&=\{2,3,5,6\}, &\rho_4&=4\\
    \mathcal{M}_1&=\{1,3\},&
    \mathcal{M}_2&=\{1,2,4\},&
    \mathcal{M}_3&=\{2,4\},&
    \mathcal{M}_4&=\{1,2,3\},\\
    \mathcal{M}_5&=\{2,4\},&
    \mathcal{M}_6&=\{2,4\},&
    \mathcal{M}_7&=\{3\}.
\end{align}
\end{subequations}

Note that due to security constraints, the messages are not replicated directly at the servers. Instead, they are replicated in the form of secret shares\footnote{Note that except graph based replicated storage pattern, security and recoverability constraints, we do not require any extra constraints on the storage. For example, we do not limit the storage overhead at each server, and the specific storage coding scheme is also considered as an object to be optimized to achieve the (asymptotic) capacity.}. In other words, the secret share of the message $W_{m,k}, m\in[M], k\in[K_m]$ stored at Server $n, n\in\mathcal{R}_m$, is denoted by $\widetilde{W}_{m,k}^{(n)}$, and the message $W_{m,k}$ must be recoverable from the collection of its secret shares,
\begin{align}
    H(W_{m,k}|\widetilde{W}^{(\mathcal{R}_m(1))}_{m,k},\cdots,\widetilde{W}^{(\mathcal{R}_m(\rho_m))}_{m,k})=0.
\end{align}
Besides, the secret shares of the messages must be independently generated, i.e., 
\begin{align}\label{eq:storind}
    H((\widetilde{W}_{m,k}^{(n)})_{m\in[M],k\in[K_m],n\in{\mathcal{R}_m}})=\sum_{m\in[M],k\in[K_m]}H((\widetilde{W}_{m,k}^{(n)})_{n\in\mathcal{R}_m}).
\end{align}
Therefore, the storage at Server $n, n\in[N]$, denoted as $\mathcal{S}_n$, is
\begin{align}\label{eq:defstor}
    \mathcal{S}_n=\{\widetilde{W}^{(n)}_{m,k} | \ m\in\mathcal{M}_n,\ k\in[K_m]\}.
\end{align}
The $X$-security constraint requires that any group of up to $X$ colluding servers, $0\leq X< N$, must reveal nothing about the messages, i.e., 
\begin{align}\label{eq:xsecure}
    I(\mathcal{S}_\mathcal{X};\mathcal{W})=0,\ \forall \mathcal{X}\subset[N],\ |\mathcal{X}|=X.
\end{align}
The user wishes to retrieve a linear combination of the $K$ messages without revealing any information about the coefficients of the linear combination to any group of up to $T$ colluding servers, $1\leq T<N$. In this work, we assume that the user is interested in the linear combination of the following form\footnote{There is a subtle difference between the form of linear combination considered in this work and that in prior works, e.g., \cite{Sun_Jafar_PC,Mirmohseni_Maddah_ws,Obead_Sarah_PLC,Obead_Sarah_Kliewer,cheng2023asymptotic}, where the desired linear combination is of the form $\sum_{m\in[M],k\in{K_m}} \lambda_{m,k}'W_{m,k}$. It is obvious that the prior form can be recovered as a special case of our form by setting $\lambda_{m,k,(\ell)}=\lambda_{m,k}'$ for all $m\in[M],k\in{K_m},\ell\in[L]$.}.
\begin{align}\label{eq:lincomb}
\lambda_{\boldsymbol{\Lambda}}(\mathcal{W})\triangleq \left[\sum_{m\in[M]}\sum_{k\in[K_m]}\lambda_{m,k,(\ell)}W_{m,k}(\ell)\right]_{\ell\in[L]},
\end{align}
where $\boldsymbol{\Lambda}=(\lambda_{m,k,(\ell)})_{m\in[M],k\in[K_m],\ell\in[L]}$ and $\lambda_{m,k,(\ell)}\in\mathbb{F}_q, m\in[M],k\in[K_m],\ell\in[L]$ are uniformly i.i.d. coefficients generated by the user privately. To this end, the user generates queries/secret shares of the coefficients of the linear combination of each message and sends them to the corresponding servers, i.e., the servers that store the secret shares of the corresponding message\footnote{It turns out that forcing the structure of queries to be in accordance with the storage pattern does not hurt the asymptotic capacity. Furthermore, this symmetry in the definition between the storage and queries helps us to identify an interesting connection between the problem of GXSTPLC and Asymm-GXSTPLC, which lies at the heart of our achievability proof.}. In particular, denote the query/share of the coefficients of the $k^{th}$ message in the $m^{th}$ message set for Server $n,n\in \mathcal{R}_m$ as $\widetilde{\lambda}_{m,k}^{(n)}$, the query sent to Server $n$, denoted as $Q_n^{[\boldsymbol{\Lambda}]}$, is
\begin{align}
    Q_n^{[\boldsymbol{\Lambda}]}=\left\{\widetilde{\lambda}_{m,k}^{(n)}\mid m\in\mathcal{M}_n, k\in[K_m]\right\},
\end{align}
and $T$-privacy constraint requires
\begin{align}
    I(Q^{[\boldsymbol{\Lambda}]}_{\mathcal{T}};\boldsymbol{\Lambda})=0,\ \forall \mathcal{T}\subset[N],\ |\mathcal{T}|=T.
\end{align} 
The coefficients must be able to be recovered from the collection of its queries/shares, i.e., for all $m\in[M], k\in[K_m]$, 
\begin{align}
    H\left(\lambda_{m,k,(1)}, \lambda_{m,k,(2)}, \cdots, \lambda_{m,k,(L)}\mid \widetilde{\lambda}_{m,k}^{(\mathcal{R}_m(1))}, \widetilde{\lambda}_{m,k}^{(\mathcal{R}_m(2))}, \cdots, \widetilde{\lambda}_{m,k}^{(\mathcal{R}_m(\rho_m))}\right) = 0.
\end{align}
The queries/shares of the coefficients of each message set must be generated independently.
\begin{align}\label{eq:queryind}
    H\left((\widetilde{\lambda}_{m,k}^{(n)})_{m\in[M], k\in[K_m], n\in\mathcal{R}_m}\right) = \sum_{m\in[M], k\in[K_m]} H\left((\widetilde{\lambda}_{m,k}^{(n)})_{n\in\mathcal{R}_m}\right).
\end{align}
Besides, the user has no prior knowledge of messages and server storage.
\begin{align}
    I(\mathcal{S}_{[N]};\boldsymbol{\Lambda},Q^{[\boldsymbol{\Lambda}]}_1,\cdots,Q^{[\boldsymbol{\Lambda}]}_N)=0.
\end{align}

Upon receiving $Q^{[\boldsymbol{\Lambda}]}_n$, Server $n$, $n\in[N]$ generates an answer string $A^{[\boldsymbol{\Lambda}]}_n$, which is a function of $Q^{[\boldsymbol{\Lambda}]}_n$ and its storage $\mathcal{S}_n$,
\begin{align}
    H(A^{[\boldsymbol{\Lambda}]}_n|Q^{[\boldsymbol{\Lambda}]}_n,\mathcal{S}_n)=0,
\end{align}
and returns the answer string $A^{[\boldsymbol{\Lambda}]}_n$ to the user.
The correctness constraint requires that the user must be able to recover the desired linear combination from the answers, i.e.,
\begin{align}
    H(\lambda_{\boldsymbol{\Lambda}}(\mathcal{W})|A^{[\boldsymbol{\Lambda}]}_{[N]},Q^{[\boldsymbol{\Lambda}]}_{[N]},\boldsymbol{\Lambda})=0.
\end{align}
The rate of a GXSTPLC scheme $R$ is the average number of $q$-ary symbols of the desired linear combination retrieved per downloaded $q$-ary symbol, that is, $R\triangleq L/D$, where $D$ is the expected total number of $q$-ary symbols downloaded by the user from all servers. The capacity of GXSTPLC, denoted as $C(N,X,T,\mathcal{W},\mathcal{R})$, is the supremum of $R$ over all feasible schemes. The asymptotic capacity of GXSTPLC is defined as $C_{\infty}=\lim_{K_1,K_2,\cdots,K_M\rightarrow\infty} C(N,X,T,\mathcal{W},\mathcal{R}).$

\section{Main Result: The Asymptotic Capacity of GXSTPLC}\label{sec:result}
The main result of this work is the complete characterization of the asymptotic capacity of GXSTPLC, as formally stated in the following theorem.
\begin{theorem}\label{thm:main}
    The asymptotic capacity of GXSTPLC (as a function of storage pattern $\mathcal{R}$) where $0\leq X<N$ and $1\leq T<N$ is
    \begin{align}
        C_\infty(\mathcal{R})=\left\{
        \begin{array}{ll}
        0,&\min_{m\in[M]}\rho_{m}\leq X+T\\
        \max_{(D_1, \cdots, D_N)\in\mathcal{D}}~~~ \left(\sum_{n\in[N]}D_n\right)^{-1}, &\min_{m\in[M]}\rho_{m}> X+T
        \end{array}
        \right.\label{eq:lp}
    \end{align}
    where $\mathcal{D}$ is defined as
    \begin{align}
        \mathcal{D}&\triangleq\left\{(D_1, \cdots, D_N)\in\mathbb{R}_+^N~\Big|~ \sum_{n\in \mathcal{R}_m'} D_n\geq 1, ~\forall m\in[M],  \mathcal{R}_m'\subset\mathcal{R}_m, |\mathcal{R}_m'|=|\mathcal{R}_m|-X-T \right\}\label{eq:defD}.
    \end{align}
\end{theorem}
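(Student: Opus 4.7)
The converse in Theorem~\ref{thm:main} has two ingredients. The vanishing-capacity claim when $\min_{m}\rho_m\leq X+T$ I would establish directly: for any $m^\star$ with $\rho_{m^\star}\leq X+T$, partition $\mathcal{R}_{m^\star}$ into a collusion set $\mathcal{X}\subseteq[N]$ of size at most $X$ and a privacy set $\mathcal{T}\subseteq[N]$ of size at most $T$ whose union covers $\mathcal{R}_{m^\star}$. The $X$-security of storage on $\mathcal{X}$ together with the $T$-privacy of queries on $\mathcal{T}$ then forces that no server in $\mathcal{R}_{m^\star}$ can contribute a symbol correlating jointly with $\mathcal{W}_{m^\star}$ and with its coefficients, while the servers outside $\mathcal{R}_{m^\star}$ hold no shares of $\mathcal{W}_{m^\star}$ at all; a Fano-type accounting then forces $R=O(1/K_{m^\star})\to 0$. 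The LP upper bound $C_\infty(\mathcal{R})\leq\big(\sum_n D_n\big)^{-1}$ for every $(D_1,\ldots,D_N)\in\mathcal{D}$ has already been established by Jia and Jafar~\cite{Jia_Jafar_GXSTPIR} for GXSTPIR, and since any PLC scheme specialises to a PIR scheme of the same rate by taking $\boldsymbol{\Lambda}$ to be a standard basis indicator, we have $C_\infty^{\mathrm{PLC}}\leq C_\infty^{\mathrm{PIR}}$ and the LP bound transfers verbatim.

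The substantive work is therefore the achievability. The plan is to route through the auxiliary asymmetric problem Asymm-GXSTPLC, in which each message set $\mathcal{W}_m$ carries its own security threshold $X_m$ and privacy threshold $T_m$ assembled into tuples $\mathbf{X}$ and $\mathbf{T}$. Given a GXSTPLC instance with $\min_m\rho_m>X+T$, I would first solve the linear program \eqref{eq:lp}--\eqref{eq:defD} to obtain an optimal download vector $(D_1^\star,\ldots,D_N^\star)$. The next step is to translate this vector into a per-message-set threshold assignment $(X_m,T_m)_{m\in[M]}$ satisfying $X_m\geq X$, $T_m\geq T$, and $X_m+T_m\leq\rho_m-1$ (so that the uniform security and privacy of the original problem are automatically inherited, while each message set is still individually feasible) together with a block-length choice such that an Asymm-GXSTPLC scheme spends exactly $D_n^\star$ symbols of download per server per desired symbol. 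Verifying that such an assignment exists and that the induced scheme, read as a GXSTPLC scheme, attains rate $\big(\sum_n D_n^\star\big)^{-1}$ is the first main bookkeeping task.

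The second, more technical, task is the Asymm-GXSTPLC construction itself. I would place storage secret shares and query shares on dual generalized Reed--Solomon codes with Cauchy-type evaluation sets, so that each server's answer becomes a univariate polynomial in which the desired symbols occupy the lowest-degree coefficients while all interference from the $X_m$-security masks, the $T_m$-privacy masks and their cross products is relegated to a controlled high-degree subspace, in the spirit of the cross subspace alignment framework. The novel element is cross-message-set alignment: different $\mathcal{W}_m$ live on different subsets $\mathcal{R}_m$ with different asymmetric thresholds, so the contributions to $\lambda_{\boldsymbol{\Lambda}}(\mathcal{W})$ from different $m$ sit a priori in different subspaces. I would resolve this by invoking a Vandermonde decomposition of Cauchy matrices to supply a change of basis under which the per-message-set contributions align into a common low-degree subspace that the user can sum coherently to recover $\lambda_{\boldsymbol{\Lambda}}(\mathcal{W})$.

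The hard part of the argument is precisely this cross-message-set alignment. Within a single $\mathcal{W}_m$, CSA already handles interference management; what is new is showing that a single algebraic identity relating Cauchy and Vandermonde matrices, after suitable tuning of the evaluation points against the per-server download budgets $D_n^\star$, simultaneously (i) aligns all $m$-specific contributions into a common low-degree subspace, (ii) respects the asymmetric $X_m$-security and $T_m$-privacy for every $m$, and (iii) saturates the LP constraints $\sum_{n\in\mathcal{R}_m'}D_n^\star\geq 1$ defining $\mathcal{D}$. Once this identity is in place the rest is linear-algebraic accounting, and the matching of achievability with the LP converse completes the proof.
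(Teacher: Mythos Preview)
Your converse sketch is fine and matches the paper's: the LP upper bound is inherited from the GXSTPIR converse of Jia--Jafar via the PLC-to-PIR reduction (restrict $\boldsymbol{\Lambda}$ to indicators), and the vanishing-capacity case is subsumed therein.

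The gap is in your achievability reduction from GXSTPLC to Asymm-GXSTPLC. You propose to stay in the original $N$-server system, assign thresholds $(X_m,T_m)$ with $X_m\geq X$, $T_m\geq T$, and then ``choose a block length'' so that server $n$ contributes $D_n^\star$ download symbols. But the Asymm-GXSTPLC construction you outline (and the one the paper actually builds, Lemma~\ref{lemma:asymm}) downloads exactly one scalar from every server, giving rate $\min_m(\rho_m-X_m-T_m)/N$ with \emph{uniform} download. Under your constraints $X_m\geq X$, $T_m\geq T$ this is at most $\min_m(\rho_m-X-T)/N$, which in general falls strictly short of the LP optimum. Concretely, for $N=6$, $X=T=1$, $\mathcal{R}_1=\{1,2,3,5\}$, $\mathcal{R}_2=\{3,4,6\}$ (the paper's Example~1 in Section~\ref{sec:proofmain}), the LP value is $2/9$, whereas your constraints force $(X_2,T_2)=(1,1)$, $\rho_2-X_2-T_2=1$, and hence rate at most $1/6$. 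Your phrase ``tuning of the evaluation points against the per-server download budgets $D_n^\star$'' suggests you hope evaluation-point placement will generate the non-uniformity, but the answer is a single scalar per server regardless of the points.

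The missing idea is a server-splitting step. The paper takes a rational LP optimizer $D_n^\star=q_n/p_n$, sets $L=\lcm(p_n)$ and $\tau_n^\star=LD_n^\star$, and builds an \emph{augmented} Asymm-GXSTPLC instance on $\overline{N}=\sum_n\tau_n^\star$ virtual servers, with server $n$ split into $\tau_n^\star$ copies. On this augmented system one assigns $\overline{X}_m=X\gamma_m$, $\overline{T}_m=T\gamma_m$ (where $\gamma_m$ is the $(X{+}T{+}1)$-th largest of $\{\tau_n^\star\}_{n\in\mathcal{R}_m}$) together with a storage pattern $\overline{\mathcal{R}}_m$ chosen by Algorithm~\ref{alg:genaugsys} so that each virtual server indexed $(n,\cdot)$ holds at most $\gamma_m$ shares of message set $m$. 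Then (i) the uniform-download scheme of Lemma~\ref{lemma:asymm} achieves rate $L/\overline{N}=(\sum_nD_n^\star)^{-1}$ on the augmented system, and (ii) after merging the $\tau_n^\star$ copies back into server $n$, any $X$ merged servers see at most $X\gamma_m=\overline{X}_m$ shares of $\mathcal{W}_m$ (and likewise for privacy), so the scheme is $X$-secure and $T$-private in the original system. It is this split--solve--merge bridge, not a direct threshold assignment on the original $N$ servers, that converts the LP solution into a scheme. Your description of the inner Asymm-GXSTPLC construction (CSA plus the Vandermonde decomposition of Cauchy matrices to align the desired-symbol contributions across message sets) is on target; it is the augmented-system step that your plan is missing.
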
 
The achievability proof of Theorem \ref{thm:main} is presented in Section \ref{sec:achi}. The converse follows from a \emph{reductio ad absurdum} argument as follows.
\begin{proof}{(Converse of Theorem \ref{thm:main})} To set up a proof by contradiction, let us assume that there exists an achievability scheme for a given storage pattern $\mathcal{R}$ that achieves a PLC rate $R>C_\infty(\mathcal{R})$. Now from this let us construct an achievability scheme for the problem of $X$-secure $T$-private information retrieval with graph based replicated storage (GXSTPIR) by restricting the coefficients of the linear combination $\boldsymbol{\Lambda}$ to the set $\mathcal{E}=\{\boldsymbol{\Lambda}_{m,k}\}_{m\in[M],k\in[K_m]}$ where $\boldsymbol{\Lambda}_{m',k'}=(\lambda_{m,k,(\ell)}^{[m',k']})_{m\in[M],k\in[K_m],\ell\in[L]}$ and $\lambda_{m,k,(\ell)}^{[m',k']}=1$ if and only if $m=m', k=k'$ otherwise $\lambda_{m,k,(\ell)}^{[m',k']}=0$. In other words, if the $k^{th}$ message in the $m^{th}$ message set is desired by the user, the user set $\boldsymbol{\Lambda}=\boldsymbol{\Lambda}_{m,k}$. Note that conditioning on $\boldsymbol{\Lambda}$ taking values over the set $\mathcal{E}$ does not change the rate of the PLC scheme $R$ since $\boldsymbol{\Lambda}$ must be independent of the query $Q_n^{[\boldsymbol{\Lambda}]}$ sent to Server $n$ and thus the answer $A_n^{[\boldsymbol{\Lambda}]}$ returned by Server $n$. Therefore, the resulting GXSTPIR scheme achieves the rate $R>C_\infty(\mathcal{R})$ for the same storage pattern $\mathcal{R}$ which leads to a contradiction that the achievable rate of GXSTPIR is bounded above by $C_\infty(\mathcal{R})$ (see \cite{Jia_Jafar_GXSTPIR}). This completes the proof.
\end{proof}
From the converse proof of Theorem \ref{thm:main}, we have the following corollary.
\begin{corollary}\label{col:gxstpir}
    The asymptotic capacity of GXSTPIR (as a function of storage pattern $\mathcal{R}$) where $0\leq X<N$ and $1\leq T<N$ is \eqref{eq:lp}.
\end{corollary}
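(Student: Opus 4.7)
The plan is to combine the asymptotic converse for GXSTPIR already established in \cite{Jia_Jafar_GXSTPIR} (and invoked inside the converse proof of Theorem~\ref{thm:main}) with the achievability for GXSTPLC supplied by Theorem~\ref{thm:main}. The upper bound $C_\infty^{\mathrm{GXSTPIR}}(\mathcal{R})\leq$ RHS of \eqref{eq:lp} is direct from the former, so only the matching lower bound requires an argument.

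For achievability I would reverse the reduction used in the converse of Theorem~\ref{thm:main}. Starting from any GXSTPLC scheme of rate $R$ for storage pattern $\mathcal{R}$, I build a GXSTPIR scheme of the same rate as follows. To retrieve $W_{m^*,k^*}$, the user feeds the underlying PLC scheme the coefficient vector $\boldsymbol{\Lambda}_{m^*,k^*}\in\mathcal{E}$ defined in the converse of Theorem~\ref{thm:main}; then $\lambda_{\boldsymbol{\Lambda}}(\mathcal{W})=W_{m^*,k^*}$, so correctness is inherited. The $X$-security constraint on the server storage is identical for the two problems and thus carries over. The PLC $T$-privacy constraint $I(Q_\mathcal{T}^{[\boldsymbol{\Lambda}]};\boldsymbol{\Lambda})=0$, specialized to $\boldsymbol{\Lambda}$ supported on $\mathcal{E}$, is precisely PIR $T$-privacy on the desired-message index $(m^*,k^*)$.

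The only point that needs care is that the download cost is preserved. Because $T\geq 1$, for every server $n$ the single-server query $Q_n^{[\boldsymbol{\Lambda}]}$ is the marginal of some $T$-subset, hence independent of $\boldsymbol{\Lambda}$; the answer $A_n^{[\boldsymbol{\Lambda}]}$ is a deterministic function of $(Q_n^{[\boldsymbol{\Lambda}]},\mathcal{S}_n)$ and the storage is itself independent of $\boldsymbol{\Lambda}$, so the marginal distribution, and in particular the expected length, of $A_n^{[\boldsymbol{\Lambda}]}$ is unchanged when $\boldsymbol{\Lambda}$ is restricted to $\mathcal{E}$. Summing over $n\in[N]$ the expected total download $D$ is identical under both distributions on $\boldsymbol{\Lambda}$, so the induced GXSTPIR scheme achieves rate $R$.

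Applying this reduction to the family of GXSTPLC schemes that approach $C_\infty(\mathcal{R})$, constructed in Section~\ref{sec:achi} to prove Theorem~\ref{thm:main}, yields GXSTPIR schemes whose rates approach the same value, closing the gap with the upper bound. The degenerate regime $\min_{m\in[M]}\rho_m\leq X+T$ needs no separate argument since the target rate is $0$. I do not anticipate a genuine obstacle: all the nontrivial content is concentrated in the achievability of Theorem~\ref{thm:main}, and what remains is bookkeeping to verify that the PLC-to-PIR reduction preserves every constraint and the download cost.
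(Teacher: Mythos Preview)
Your proposal is correct and matches the paper's own reasoning: the paper derives the corollary directly ``from the converse proof of Theorem~\ref{thm:main},'' i.e., the same PLC-to-PIR reduction you spell out (restrict $\boldsymbol{\Lambda}$ to the indicator set $\mathcal{E}$, use $T\geq 1$ to argue the per-server query/answer distribution and hence the download are unchanged), combined with the achievability half of Theorem~\ref{thm:main} and the GXSTPIR upper bound from \cite{Jia_Jafar_GXSTPIR}. You have simply made explicit what the paper leaves implicit.
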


\section{Achievability Proof of Theorem \ref{thm:main}}\label{sec:achi}
\subsection{Asymmetric $\mathbf{X}-$Secure $\mathbf{T}$-Private Linear Computation with Graph Based Replicated Storage}
In this section, we present an achievability scheme for the problem of asymmetric $\mathbf{X}$-secure and $\mathbf{T}$-private linear computation with graph based replicated storage that turns out to be the key ingredient to the asymptotic capacity achieving scheme of GXSTPLC. The problem of Asymm-GXSTPLC is to allow the user retrieve a linear combination of messages in the form of \eqref{eq:lincomb} from $N$ distributed servers that store the $M$ message sets according to a graph based replicated storage (following the same definitions and notations in Section \ref{sec:ps}) while guaranteeing an \emph{asymmetric} $\mathbf{T}$-privacy constraint, i.e., for all $m\in[M]$, any up to $T_m$ colluding servers must not reveal anything about the coefficients of the linear combination for the $m^{th}$ message set, $1\leq T_m<N$. Formally, for all $m\in[M]$,
\begin{align}
    I(Q^{[\boldsymbol{\Lambda}]}_{\mathcal{T}_m};(\lambda_{m,k,(\ell)})_{k\in[K_m],\ell\in[L]})=0,\ \forall \mathcal{T}_m\subset[N],\ |\mathcal{T}_m|=T_m,
\end{align} 
and $\mathbf{T}=(T_1, T_2, \cdots, T_m)$. 
Besides, for all $m\in[M]$, any up to $X_m$ colluding servers cannot learn anything about the messages in the $m^{th}$ message set, $0\leq X_m<N$, i.e.,
\begin{align}
    I(\mathcal{S}_{\mathcal{X}_m};\mathcal{W}_m)=0,\ \forall \mathcal{X}_m\subset[N],\ |\mathcal{X}_m|=X_m,
\end{align}
and $\mathbf{X}=(X_1, X_2, \cdots, X_m)$.

One important observation is that it is possible to construct GXSTPLC schemes from Asymm-GXSTPLC schemes. A naive idea in this regard is to apply $T_1=T_2=\cdots=T_m=T$ and $X_1=X_2=\cdots=X_m=X$ since it is easy to identify that Asymm-GXSTPLC reduces to GXSTPLC in this case. However, our idea goes one step further by exploiting finer \emph{asymmetric} privacy and security structures and finally leads to the complete characterization of the asymptotic capacity of GXSTPLC. Let us elaborate via an illustrative example. Consider the following example where we have $N=7$ servers that store $M=4$ message sets according to the following storage pattern (this is exactly the example in Section \ref{sec:ps}).
\begin{subequations}\label{eq:storageex1}
\begin{align}
    \mathcal{R}_1&=\{1,2,4\},&
    \mathcal{R}_2&=\{2,3,4,5,6\},\\
    \mathcal{R}_3&=\{1,4,7\},&
    \mathcal{R}_4&=\{2,3,5,6\}.
\end{align}
\end{subequations}
Conversely, we can also write
\begin{subequations}
\begin{align}
    \mathcal{M}_1&=\{1,3\},&
    \mathcal{M}_2&=\{1,2,4\},\\
    \mathcal{M}_3&=\{2,4\},&
    \mathcal{M}_4&=\{1,2,3\},\\
    \mathcal{M}_5&=\{2,4\},&
    \mathcal{M}_6&=\{2,4\},\\
    \mathcal{M}_7&=\{3\}.
\end{align}
\end{subequations}
Besides, for asymmetric security and privacy thresholds, let us set $X_1=X_2=X_3=X_4=0$, and $T_1=T_3=1, T_2=T_4=2$. Assume that there exists an Asymm-GXSTPLC scheme for this setting that achieves the rate $R$. If we merge Server $2$ and Server $3$ into one new server (denoted as Server $2\cup3$) by assigning the storage and queries for the two servers to the new server, and similarly merge Server $5$ and Server $6$ into one new server (denoted as Server $5\cup6$), we indeed end up with a linear computation scheme for the following storage pattern with $N=5$ servers (listed as $(1,2\cup3,4,5\cup6,7)$).
\begin{subequations}\label{eq:storageex1reduced}
\begin{align}
    \mathcal{M}_1&=\{1,3\},\\
    \mathcal{M}_{2\cup3}&=\mathcal{M}_{2}\cup\mathcal{M}_{3}=\{1,2,4\},\\
    \mathcal{M}_4&=\{1,2,3\},\\
    \mathcal{M}_{5\cup6}&=\mathcal{M}_{5}\cup\mathcal{M}_{6}=\{2,4\},\\
    \mathcal{M}_7&=\{3\}.
\end{align}
\end{subequations}
Now let us see to what extent is this $5$-server linear computation scheme private. Observing the fact that the coefficients of the linear combination for each message set must be independently secret shared according to \eqref{eq:queryind}, Server $2\cup3$ as well as Server $5\cup6$ learns nothing about the coefficients of the linear combination for the $2^{nd}$ and $4^{th}$ message sets even if they see two queries of the coefficients for the two message sets that were originally designated for Server $2,3$ and $5,6$ respectively. This is because we set asymmetric privacy thresholds $T_2=T_4=2$. Besides, Server $2\cup 3$ cannot reveal anything about the coefficients of the linear combination for the $1^{st}$ message set because it can see only one query of the coefficients for the first message set and $T_1=1$. Accordingly, it is easy to check that the coefficients of the linear combination for each message set are kept private from each of the $5$ servers. Therefore this $N=5$-server linear computation scheme is indeed $T=1$-private while achieving the rate $R$. According to Theorem \ref{thm:main}, the rate of GXSTPLC with $X=0, T=1$ and the storage pattern in \eqref{eq:storageex1reduced} is bounded above by $2/7$. Therefore, if it is possible to achieve the rate $R=2/7$ for the asymmetric setting in the example, we immediately settle the asymptotic capacity of the corresponding GXSTPLC problem. This is shown to be true as an example of a more general achievability result (i.e., Lemma \ref{lemma:asymm}) in Section \ref{sec:asymmex1}. 

In general, in the achievability proof of Theorem \ref{thm:main}, for any given GXSTPLC setting, i.e., for any given set of storage pattern and privacy and security thresholds, we first carefully construct a corresponding Asymm-GXSTPLC setting which we refer to as \emph{augmented system} by identifying its storage pattern and \emph{asymmetric} privacy and security thresholds such that it can be leveraged to construct achievability schemes for the original GXSTPLC problem by merging servers. Then we conclude the achievability proof by showing that for the constructed augmented system, it is indeed possible to achieve the rate that equals the asymptotic capacity of the corresponding GXSTPLC problem. Two main technical challenges in this regard are: 1) such augmented system is not unique, therefore we must construct augmented systems that can be eventually exploited to construct asymptotic capacity-achieving GXSTPLC schemes; 2) the achievability scheme of Asymm-GXSTPLC itself requires interference alignment of various noise and interference symbols due to heterogeneous storage structure and asymmetric privacy and security levels across message sets, while guaranteeing that the desired linear combination is still retrievable. The first challenge is tackled in Section \ref{sec:proofmain}, and the second challenge is resolved as an achievability result for the problem of Asymm-GXSTPLC. This crucial achievability result is presented in the following lemma, and the remainder of this section is devoted to the proof of the achievability result.
\begin{lemma}\label{lemma:asymm}
    The following rate $R$ is achievable for Asymm-GXSTPLC. 
    \begin{align}
        R=\frac{\min_{m\in[M]}(\rho_m-X_m-T_m)^+}{N}.
    \end{align}
\end{lemma}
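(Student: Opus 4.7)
The plan is to construct an explicit linear scheme in the spirit of cross subspace alignment (CSA), in which queries and storage are dual Cauchy-plus-Vandermonde codes and the answer at each server is a single $\mathbb{F}_q$ symbol, so that from $N$ downloaded symbols the user extracts $L\triangleq\min_{m\in[M]}(\rho_m-X_m-T_m)^+$ desired symbols, achieving rate $L/N$. If $L=0$ the claim is trivial, so assume $L\geq 1$ and hence $\rho_m\geq L+X_m+T_m$ for every $m\in[M]$, and each message $W_{m,k}$ is split into $L$ symbols.

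I would fix a field $\mathbb{F}_q$ large enough to choose distinct $\alpha_1,\dots,\alpha_N$ and $f_1,\dots,f_L$ with $f_\ell\neq\alpha_n$. For each message set $m$, at each server $n\in\mathcal{R}_m$, encode the $L$ symbols of $W_{m,k}$ into the Cauchy pattern $\{1/(f_\ell-\alpha_n)\}_{\ell}$ padded by $X_m$ Vandermonde noise dimensions $\{\alpha_n^{x-1}\}_{x=1}^{X_m}$, and encode the $L$ coefficient symbols $\lambda_{m,k,(\ell)}$ in the dual way with $T_m$ Vandermonde noise dimensions. The per-set $X_m$-security and $T_m$-privacy then follow from the MDS property of the Cauchy-Vandermonde generator matrix, which applies because $\rho_m\geq L+X_m$ and $\rho_m\geq L+T_m$; independence across message sets is immediate from independent noise generation. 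Each server returns the single-symbol answer $A_n = \sum_{m\in\mathcal{M}_n}\sum_{k\in[K_m]}\widetilde{\lambda}_{m,k}^{(n)}\widetilde{W}_{m,k}^{(n)}$.

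When $A_n$ is expanded, the diagonal ($\ell_1=\ell_2=\ell$) components of the signal-times-signal product carry the $L$ desired sums $\sum_{m,k}\lambda_{m,k,(\ell)}W_{m,k}(\ell)$ weighted by $1/(f_\ell-\alpha_n)^2$, while the remaining off-diagonal, signal-times-noise, and noise-times-noise terms decompose via partial fractions into the common span of $\{1/(f_\ell-\alpha_n)\}_\ell$ together with a Vandermonde block $\{\alpha_n^i\}_i$. The crucial step, and the novelty highlighted by the abstract, is a Vandermonde decomposition of Cauchy matrices that simultaneously (i) aligns the interference contributions from message sets with heterogeneous $(\rho_m,X_m,T_m)$ into one $(N-L)$-dimensional subspace of $\mathbb{F}_q^N$, and (ii) certifies that the corresponding $N\times N$ evaluation matrix is nonsingular so that the $L$ desired symbols are separable from the interference.

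The main obstacle will be this global alignment: because message set $m$ only lives on $\rho_m\leq N$ servers and carries its own noise dimensions, the total interference threatens to overspill the $(N-L)$-dimensional budget, which is why the construction must choose the polynomial alignment basis consistently across all sets and invoke the Cauchy-Vandermonde factorization to collapse the per-set contributions. The trade-off visible in $L=\min_m(\rho_m-X_m-T_m)^+$ enters exactly at this point, since it is the tightest per-set constraint that allows simultaneous alignment, noise absorption, and recoverability. Once the alignment is established and the decoding matrix is shown to be invertible by a Cauchy-Vandermonde determinant argument, correctness together with the per-set $\mathbf{X}$-security and $\mathbf{T}$-privacy checks closes out the proof at rate $L/N$.
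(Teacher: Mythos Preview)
Your high-level plan (CSA-type storage/query with Cauchy signal pattern plus Vandermonde noise, one downloaded symbol per server) is on target, but the decoding mechanism you sketch has a genuine gap, and two design ingredients that make the paper's proof go through are missing from your proposal.

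First, the missing ingredients. In the paper the answer at server $n$ is not simply $\sum_{m,k}\widetilde{\lambda}_{m,k}^{(n)}\widetilde{W}_{m,k}^{(n)}$; each set-$m$ contribution is pre-scaled by the dual-GRS weight $v_{n,m}=\prod_{n'\in\mathcal{R}_m\setminus\{n\}}(\alpha_n-\alpha_{n'})^{-1}$. Moreover, the query does \emph{not} use a Cauchy signal pattern: the coefficient part is the constant $u_{m,\ell}\boldsymbol{\lambda}_{m,(\ell)}$ while the noise is premultiplied by $(\alpha_n-f_\ell)$. Consequently the product with the stored share $\frac{1}{\alpha_n-f_\ell}\mathbf{W}_{m,(\ell)}+\text{poly}(\alpha_n)$ yields a desired term with coefficient $\frac{u_{m,\ell}}{\alpha_n-f_\ell}$ (a \emph{single} Cauchy factor, not the squared factor $1/(f_\ell-\alpha_n)^2$ you describe), and all interference from set $m$ becomes a \emph{pure polynomial} in $\alpha_n$ of degree at most $X_m+T_m-1$, with no residual Cauchy fractions. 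The role of the Cauchy--Vandermonde decomposition is then only to recombine the $L$ desired signals, not to align interference.

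Second, the gap in your alignment argument. You propose to pack all interference into a common $(N-L)$-dimensional subspace of $\mathbb{F}_q^N$ and invert an $N\times N$ Cauchy--Vandermonde matrix. In graph-based storage this fails: the contribution of set $m$ to $A_n$ is zero for every $n\notin\mathcal{R}_m$, so the interference vectors from set $m$, embedded in $\mathbb{F}_q^N$, are supported only on the coordinates $\mathcal{R}_m$. Zero-padded local Vandermonde vectors $(\mathbf{1}_{\{n\in\mathcal{R}_m\}}\alpha_n^{i})_{n\in[N]}$ from different sets $m$ are in general linearly independent and do \emph{not} lie in any single $(N-L)$-dimensional subspace; a global Cauchy--Vandermonde factorization cannot ``collapse'' them. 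The paper sidesteps global alignment entirely: with the $v_{n,m}$ weights in place, the user forms only $L$ combinations $V_i=\sum_{n\in[N]}\alpha_n^{i-1}A_n$, and the identity $\sum_{n\in\mathcal{R}_m}v_{n,m}\alpha_n^{j}=0$ for $0\le j\le\rho_m-2$ kills the interference from each set $m$ \emph{locally} on $\mathcal{R}_m$ (this is exactly where $L\le\rho_m-X_m-T_m$ is used). What remains is an $L\times L$ Vandermonde system in $f_1,\dots,f_L$ obtained from the Cauchy--Vandermonde decomposition, which the user inverts. So the fix is: insert the $v_{n,m}$ answer weights, replace the query's Cauchy signal pattern by the $u_{m,\ell}$ scaling with $(\alpha_n-f_\ell)$-weighted noise, and decode via local interference cancellation followed by an $L\times L$ (not $N\times N$) inversion.
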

\begin{remark}
    Lemma \ref{lemma:asymm} reveals an interesting trade-off among replication factor $\rho_m$, security threshold $X_m$ and privacy threshold $T_m$. Specifically, in our achievability scheme, it is the minimum number of $(\rho_m-X_m-T_m)$ over $m\in[M]$ that determines the achievable rate. In other words, it is still possible to achieve a higher rate even if certain message sets are less replicated as long as their security and privacy requirements are also mild.
\end{remark}

\subsubsection{Motivating Example 1}\label{sec:asymmex1}
Continuing our discussion above, let us consider the motivating example with the storage pattern in \eqref{eq:storageex1} and the asymmetric security and privacy thresholds $X_1=X_2=X_3=X_4=0$, $T_1=T_3=1, T_2=T_4=2$. In particular, let us see how to achieve the rate $R=2/7$ for this Asymm-GXSTPLC setting that matches Lemma \ref{lemma:asymm} and the asymptotic capacity of the corresponding GXSTPLC setting. Recall that the $K$ messages are partitioned into four disjoint sets $\mathcal{W}_1$, $\mathcal{W}_2$, $\mathcal{W}_3$ and $\mathcal{W}_4$, where 
\begin{subequations}
    \begin{align}
        \mathcal{W}_1=\{W_{1,1},W_{1,2},\ldots,W_{1,K_1}\},\\
        \mathcal{W}_2=\{W_{2,1},W_{2,2},\ldots,W_{2,K_2}\},\\
        \mathcal{W}_3=\{W_{3,1},W_{3,2},\ldots,W_{3,K_3}\},\\
        \mathcal{W}_4=\{W_{4,1},W_{4,2},\ldots,W_{4,K_4}\}.
    \end{align}
\end{subequations}
Let us set $L=2$, i.e., each message is comprised of $L=2$ symbols from $\mathbb{F}_q, q\geq 9$,
\begin{align}
    W_{m,k}=(W_{m,k}(1),W_{m,k}(2)),\ \forall m\in[4],k\in[K_m].
\end{align}
Let $\alpha_1,\alpha_2,\alpha_3,\alpha_4,\alpha_5,\alpha_7,f_1,f_2$ be $9$ distinct constants from the finite field $\mathbb{F}_q$. For all $m\in[4],\ell\in\{1,2\},n\in\mathcal{R}_m$, let us define the following constants.
\begin{align}
    v_{n,m}&=\prod_{n'\in\mathcal{R}_m\setminus\{n\}}{(\alpha_n-\alpha_{n'})^{-1}},\\
    u_{m,l}&=\prod_{n\in\mathcal{R}_m}{(f_l-\alpha_n)}.
\end{align}
The storage at the $7$ servers, i.e. $\mathcal{S}_1,\mathcal{S}_2,\ldots,\mathcal{S}_7$ are constructed as follows.
\begin{subequations}
    \begin{align}
        \mathcal{S}_1&=\{\mathbf{W}^{(1)}_1,\mathbf{W}^{(1)}_3\},\ &
        \mathcal{S}_2&=\{\mathbf{W}^{(2)}_1,\mathbf{W}^{(2)}_2,\mathbf{W}^{(2)}_4\},\\
        \mathcal{S}_3&=\{\mathbf{W}^{(3)}_2,\mathbf{W}^{(3)}_4\},\ &
        \mathcal{S}_4&=\{\mathbf{W}^{(4)}_1,\mathbf{W}^{(4)}_2,\mathbf{W}^{(4)}_3\},\\
        \mathcal{S}_5&=\{\mathbf{W}^{(5)}_2,\mathbf{W}^{(5)}_4\},\ &
        \mathcal{S}_6&=\{\mathbf{W}^{(6)}_2,\mathbf{W}^{(6)}_4\},\\
        \mathcal{S}_7&=\{\mathbf{W}^{(7)}_3\},
    \end{align}
\end{subequations}
where for all $n\in\{1,2,4\}$,
\begin{align}
    \mathbf{W}^{(n)}_1=
    \begin{bmatrix}
        \frac{1}{\alpha_n-f_1}\mathbf{W}_{1,(1)}\\
        \frac{1}{\alpha_n-f_2}\mathbf{W}_{1,(2)},
    \end{bmatrix},
\end{align}
for all $n\in\{2,3,4,5,6\}$,
\begin{align}
    \mathbf{W}^{(n)}_2=
    \begin{bmatrix}
        \frac{1}{\alpha_n-f_1}\mathbf{W}_{2,(1)}\\
        \frac{1}{\alpha_n-f_2}\mathbf{W}_{2,(2)}
    \end{bmatrix},
\end{align}
for all $n\in\{1,4,7\}$,
\begin{align}
    \mathbf{W}^{(n)}_3=
    \begin{bmatrix}
        \frac{1}{\alpha_n-f_1}\mathbf{W}_{3,(1)}\\
        \frac{1}{\alpha_n-f_2}\mathbf{W}_{3,(2)},
    \end{bmatrix},
\end{align}
for all $n\in\{2,3,5,6\}$,
\begin{align}
    \mathbf{W}^{(n)}_4=
    \begin{bmatrix}
        \frac{1}{\alpha_n-f_1}\mathbf{W}_{4,(1)}\\
        \frac{1}{\alpha_n-f_2}\mathbf{W}_{4,(2)}
    \end{bmatrix},
\end{align}
and for all $m\in[4],\ell\in[2]$,
\begin{align}
    \mathbf{W}_{m,(\ell)}=\left[W_{m,1}(\ell),\cdots,W_{m,k_m}(\ell)\right]^\mathsf{T}.
\end{align}
To retrieve the desired linear combination of messages, the user generates i.i.d. uniform column vectors $\mathbf{Z}_{m,(\ell)}\in\mathbb{F}_q^{K_m}$ for all $m\in[4],\ell\in\{1,2\}$ and $\mathbf{Z}'_{m,(\ell)}\in\mathbb{F}_q^{K_m}$ for all $m\in\{2,4\},\ell\in\{1,2\}$ privately that are independent of $\boldsymbol{\Lambda}$.
The queries $Q_{[N]}^{[\boldsymbol{\Lambda}]}$ are constructed as follows.
\begin{subequations}
    \begin{align}
        Q_1^{[\boldsymbol{\Lambda}]}&=
        \{\mathbf{Q}^{[\boldsymbol{\Lambda}]}_{1,1},
        \mathbf{Q}^{[\boldsymbol{\Lambda}]}_{1,3}\},&\
        Q_2^{[\boldsymbol{\Lambda}]}&=
        \{\mathbf{Q}^{[\boldsymbol{\Lambda}]}_{2,1},
        \mathbf{Q}^{[\boldsymbol{\Lambda}]}_{2,2},
        \mathbf{Q}^{[\boldsymbol{\Lambda}]}_{2,4}\},\\
        Q_3^{[\boldsymbol{\Lambda}]}&=
        \{\mathbf{Q}^{[\boldsymbol{\Lambda}]}_{3,2},
        \mathbf{Q}^{[\boldsymbol{\Lambda}]}_{3,4}\},&\
        Q_4^{[\boldsymbol{\Lambda}]}&=
        \{\mathbf{Q}^{[\boldsymbol{\Lambda}]}_{4,1},
        \mathbf{Q}^{[\boldsymbol{\Lambda}]}_{4,2},
        \mathbf{Q}^{[\boldsymbol{\Lambda}]}_{4,3}\},\\
        Q_5^{[\boldsymbol{\Lambda}]}&=
        \{\mathbf{Q}^{[\boldsymbol{\Lambda}]}_{5,2},
        \mathbf{Q}^{[\boldsymbol{\Lambda}]}_{5,4}\},&\
        Q_6^{[\boldsymbol{\Lambda}]}&=
        \{\mathbf{Q}^{[\boldsymbol{\Lambda}]}_{6,2},
        \mathbf{Q}^{[\boldsymbol{\Lambda}]}_{6,4}\},\\
        Q_7^{[\boldsymbol{\Lambda}]}&=
        \{\mathbf{Q}^{[\boldsymbol{\Lambda}]}_{7,3}\},
\end{align}
\end{subequations}
where for all $n\in\{1,2,4\}$,
\begin{align}
    \mathbf{Q}^{[\boldsymbol{\Lambda}]}_{n,1}=
    \begin{bmatrix}
        u_{1,1}\boldsymbol{\lambda}_{1,(1)}+(\alpha_n-f_1)\mathbf{Z}_{1,(1)}\\
        u_{1,2}\boldsymbol{\lambda}_{1,(2)}+(\alpha_n-f_2)\mathbf{Z}_{1,(2)}
    \end{bmatrix},
\end{align}
for all $n\in\{2,3,4,5,6\}$,
\begin{align}
    \mathbf{Q}^{[\boldsymbol{\Lambda}]}_{n,2}=
    \begin{bmatrix}
        u_{2,1}\boldsymbol{\lambda}_{2,(1)}+(\alpha_n-f_1)(\mathbf{Z}_{2,(1)}+\alpha_n\mathbf{Z}'_{2,(1)})\\
        u_{2,2}\boldsymbol{\lambda}_{2,(2)}+(\alpha_n-f_2)(\mathbf{Z}_{2,(2)}+\alpha_n\mathbf{Z}'_{2,(2)})
    \end{bmatrix},
\end{align}
for all $n\in\{1,4,7\}$,
\begin{align}
    \mathbf{Q}^{[\boldsymbol{\Lambda}]}_{n,3}=
    \begin{bmatrix}
        u_{3,1}\boldsymbol{\lambda}_{3,(1)}+(\alpha_n-f_1)\mathbf{Z}_{3,(1)}\\
        u_{3,2}\boldsymbol{\lambda}_{3,(2)}+(\alpha_n-f_2)\mathbf{Z}_{3,(2)}
    \end{bmatrix},
\end{align}
for all $n\in\{2,3,5,6\}$,
\begin{align}
    \mathbf{Q}^{[\boldsymbol{\Lambda}]}_{n,4}=
    \begin{bmatrix}
        u_{4,1}\boldsymbol{\lambda}_{4,(1)}+(\alpha_n-f_1)(\mathbf{Z}_{4,(1)}+\alpha_n\mathbf{Z}'_{4,(1)})\\
        u_{4,2}\boldsymbol{\lambda}_{4,(2)}+(\alpha_n-f_2)(\mathbf{Z}_{4,(2)}+\alpha_n\mathbf{Z}'_{4,(2)})
    \end{bmatrix},
\end{align}
and for all $m\in[4],\ell\in[2]$,
\begin{align}
    \boldsymbol{\lambda}_{m,(\ell)}=[\lambda_{m,1,(\ell)},\cdots,\lambda_{m,K_m,(\ell)}]^\mathsf{T}.
\end{align}
The $T_1=T_3=1$ and $T_2=T_4=2$ privacy constraints are guaranteed by the (MDS coded) random noise vectors protecting the coefficients of the desired linear combination. Note that the desired linear combination is equivalently written in terms of $(\mathbf{W}_{m,(\ell)})_{m\in[4],\ell\in[2]}$ and $(\boldsymbol{\lambda}_{m,(\ell)})_{m\in[4],\ell\in[2]}$ as follows.
\begin{align}
    \lambda_{\boldsymbol{\Lambda}}(\mathcal{W})\hspace{-0.1cm}&=\hspace{-0.1cm}
    \begin{bmatrix}
        \sum_{m\in[4]}\sum_{k\in[K_m]} \lambda_{m,k,(1)}W_{m,k}(1)\\
        \sum_{m\in[4]}\sum_{k\in[K_m]} \lambda_{m,k,(2)}W_{m,k}(2)
    \end{bmatrix}\\&=
    \begin{bmatrix}
        \sum_{m\in[4]}\mathbf{W}_{m,(1)}^\mathsf{T}\boldsymbol{\lambda}_{m,(1)}\\
        \sum_{m\in[4]}\mathbf{W}_{m,(2)}^\mathsf{T}\boldsymbol{\lambda}_{m,(2)}
    \end{bmatrix}.
\end{align}

The query $Q_n^{[\boldsymbol{\Lambda}]}$ is sent to Server $n$ for all $n\in[7]$. Upon receiving the query $Q_n^{[\boldsymbol{\Lambda}]}$, Server $1$ responds to the user with the following answer.
\begin{align}
    A^{[\boldsymbol{\Lambda}]}_1
    =&v_{1,1}(\mathbf{W}^{(1)}_1)^\mathsf{T}\mathbf{Q}^{[\boldsymbol{\Lambda}]}_{1,1}+v_{1,3}(\mathbf{W}^{(1)}_3)^\mathsf{T}\mathbf{Q}^{[\boldsymbol{\Lambda}]}_{1,3}\\ 
    =&\frac{v_{1,1}u_{1,1}}{\alpha_1-f_1}\mathbf{W}_{1,(1)}^\mathsf{T}\boldsymbol{\lambda}_{1,(1)}+\frac{v_{1,1}u_{1,2}}{\alpha_1-f_2}\mathbf{W}_{1,(2)}^\mathsf{T}\boldsymbol{\lambda}_{1,(2)}+\frac{v_{1,3}u_{3,1}}{\alpha_1-f_1}\mathbf{W}_{3,(1)}^\mathsf{T}\boldsymbol{\lambda}_{3,(1)}\notag\\
    &+\frac{v_{1,3}u_{3,2}}{\alpha_1-f_2}\mathbf{W}_{3,(2)}^\mathsf{T}\boldsymbol{\lambda}_{3,(2)}+v_{1,1}I_1+v_{1,3}I_3,
\end{align}
where
\begin{align}
    I_1=&\mathbf{W}_{1,(1)}^\mathsf{T}\mathbf{Z}_{1,(1)}+\mathbf{W}_{1,(2)}^\mathsf{T}\mathbf{Z}_{1,(2)}\\
    I_3=&\mathbf{W}_{3,(1)}^\mathsf{T}\mathbf{Z}_{3,(1)}+\mathbf{W}_{3,(2)}^\mathsf{T}\mathbf{Z}_{3,(2)}.
\end{align}
The answer returned by Server $2$ is constructed as follows.
\begin{align}
    A^{[\boldsymbol{\Lambda}]}_2
    =&v_{2,1}(\mathbf{W}^{(2)}_1)^\mathsf{T}\mathbf{Q}^{[\boldsymbol{\Lambda}]}_{2,1}+v_{2,2}(\mathbf{W}^{(2)}_2)^\mathsf{T}\mathbf{Q}^{[\boldsymbol{\Lambda}]}_{2,2}+v_{2,4}(\mathbf{W}^{(2)}_4)^\mathsf{T}\mathbf{Q}^{[\boldsymbol{\Lambda}]}_{2,4}\\ 
    =&\frac{v_{2,1}u_{1,1}}{\alpha_2-f_1}\mathbf{W}_{1,(1)}^\mathsf{T}\boldsymbol{\lambda}_{1,(1)}
    +\frac{v_{2,1}u_{1,2}}{\alpha_2-f_2}\mathbf{W}_{1,(2)}^\mathsf{T}\boldsymbol{\lambda}_{1,(2)}
    +\frac{v_{2,2}u_{2,1}}{\alpha_2-f_1}\mathbf{W}_{2,(1)}^\mathsf{T}\boldsymbol{\lambda}_{2,(1)}\notag\\
    &+\frac{v_{2,2}u_{2,2}}{\alpha_2-f_2}\mathbf{W}_{2,(2)}^\mathsf{T}\boldsymbol{\lambda}_{2,(2)}
    +\frac{v_{2,4}u_{4,1}}{\alpha_2-f_1}\mathbf{W}_{4,(1)}^\mathsf{T}\boldsymbol{\lambda}_{4,(1)}
    +\frac{v_{2,4}u_{4,2}}{\alpha_2-f_2}\mathbf{W}_{4,(2)}^\mathsf{T}\boldsymbol{\lambda}_{4,(2)}\nonumber\\
    &+v_{2,1}I_1+v_{2,2}I_2+v_{2,2}\alpha_2 I'_2+v_{2,4}I_4+v_{2,4}\alpha_2 I'_4,
\end{align}
where 
\begin{align}
    I_2=&\mathbf{W}_{2,(1)}^\mathsf{T}\mathbf{Z}_{2,(1)}+\mathbf{W}_{2,(2)}^\mathsf{T}\mathbf{Z}_{2,(2)}\\
    I'_2=&\mathbf{W}_{2,(1)}^\mathsf{T}\mathbf{Z}'_{2,(1)}+\mathbf{W}_{2,(2)}^\mathsf{T}\mathbf{Z}'_{2,(2)}\\
    I_4=&\mathbf{W}_{4,(1)}^\mathsf{T}\mathbf{Z}_{4,(1)}+\mathbf{W}_{4,(2)}^\mathsf{T}\mathbf{Z}_{4,(2)}\\
    I'_4=&\mathbf{W}_{4,(1)}^\mathsf{T}\mathbf{Z}'_{4,(1)}+\mathbf{W}_{4,(2)}^\mathsf{T}\mathbf{Z}'_{4,(2)}.
\end{align}
The answer returned by Server $n, n\in\{3,5,6\}$ is constructed as follows.
\begin{align}
    A^{[\boldsymbol{\Lambda}]}_n
    =&v_{n,2}(\mathbf{W}^{(n)}_2)^\mathsf{T}\mathbf{Q}^{[\boldsymbol{\Lambda}]}_{n,2}+v_{n,4}(\mathbf{W}^{(n)}_4)^\mathsf{T}\mathbf{Q}^{[\boldsymbol{\Lambda}]}_{n,4}\\ 
    =&\frac{v_{n,2}u_{2,1}}{\alpha_n-f_1}\mathbf{W}_{2,(1)}^\mathsf{T}\boldsymbol{\lambda}_{2,(1)}+\frac{v_{n,2}u_{2,2}}{\alpha_n-f_2}\mathbf{W}_{2,(2)}^\mathsf{T}\boldsymbol{\lambda}_{2,(2)}
    +\frac{v_{n,4}u_{4,1}}{\alpha_n-f_1}\mathbf{W}_{4,(1)}^\mathsf{T}\boldsymbol{\lambda}_{4,(1)}\nonumber\\
    &+\frac{v_{n,4}u_{4,2}}{\alpha_n-f_2}\mathbf{W}_{4,(2)}^\mathsf{T}\boldsymbol{\lambda}_{4,(2)}+v_{n,2}I_2+v_{n,2}\alpha_n I'_2+v_{n,4}I_4+v_{n,4}\alpha_n I'_4.
\end{align}
The answer returned by Server $4$ is constructed as follows.
\begin{align}
    A^{[\boldsymbol{\Lambda}]}_4
    =&v_{4,1}(\mathbf{W}^{(4)}_1)^\mathsf{T}\mathbf{Q}^{[\boldsymbol{\Lambda}]}_{4,1}+v_{4,2}(\mathbf{W}^{(4)}_2)^\mathsf{T}\mathbf{Q}^{[\boldsymbol{\Lambda}]}_{4,2}+v_{4,3}(\mathbf{W}^{(4)}_3)^\mathsf{T}\mathbf{Q}^{[\boldsymbol{\Lambda}]}_{4,3}\\ 
    =&\frac{v_{4,1}u_{1,1}}{\alpha_4-f_1}\mathbf{W}_{1,(1)}^\mathsf{T}\boldsymbol{\lambda}_{1,(1)}
    +\frac{v_{4,1}u_{1,2}}{\alpha_4-f_2}\mathbf{W}_{1,(2)}^\mathsf{T}\boldsymbol{\lambda}_{1,(2)}
    +\frac{v_{4,2}u_{2,1}}{\alpha_4-f_1}\mathbf{W}_{2,(1)}^\mathsf{T}\boldsymbol{\lambda}_{2,(1)}\nonumber\\
    &+\frac{v_{4,2}u_{2,2}}{\alpha_4-f_2}\mathbf{W}_{2,(2)}^\mathsf{T}\boldsymbol{\lambda}_{2,(2)}
    +\frac{v_{4,3}u_{3,1}}{\alpha_4-f_1}\mathbf{W}_{3,(1)}^\mathsf{T}\boldsymbol{\lambda}_{3,(1)}
    +\frac{v_{4,3}u_{3,2}}{\alpha_4-f_2}\mathbf{W}_{3,(2)}^\mathsf{T}\boldsymbol{\lambda}_{3,(2)}\nonumber\\
    &+v_{4,1}I_1+v_{4,2}I_2+v_{4,2}\alpha_4 I'_2+v_{4,3}I_3.
\end{align}
The answer returned by Server $7$ is constructed as follows.
\begin{align}
    A^{[\boldsymbol{\Lambda}]}_7
    =&v_{7,3}(\mathbf{W}^{(7)}_3)^\mathsf{T}\mathbf{Q}^{[\boldsymbol{\Lambda}]}_{7,3}\\ 
    =&\frac{v_{7,3}u_{3,1}}{\alpha_7-f_1}\mathbf{W}_{3,(1)}^\mathsf{T}\boldsymbol{\lambda}_{3,(1)}+\frac{v_{7,3}u_{3,2}}{\alpha_7-f_2}\mathbf{W}_{3,(2)}^\mathsf{T}\boldsymbol{\lambda}_{3,(2)}+v_{7,3}I_3.
\end{align}
Upon the receiving the answers, to recover the desired linear combination, the user firstly evaluates two quantities $V_1$ and $V_2$ as follows.
\begin{align}
V_1=&\sum_{n\in[7]}A^{[\boldsymbol{\Lambda}]}_n\\
=&\sum_{m\in[4]}\Big(\Big(\sum_{n\in\mathcal{R}_m}\frac{v_{n,m}u_{m,1}}{\alpha_n-f_1}\Big)\mathbf{W}_{m,(1)}^\mathsf{T}\boldsymbol{\lambda}_{m,(1)}+\Big(\sum_{n\in\mathcal{R}_m}\frac{v_{n,m}u_{m,2}}{\alpha_n-f_2}\Big)\mathbf{W}_{m,(2)}^\mathsf{T}\boldsymbol{\lambda}_{m,(2)}\Big)\nonumber\\
    &+\Big(\sum_{n\in\mathcal{R}_1}v_{n,1}\Big)I_1
    +\Big(\sum_{n\in\mathcal{R}_2}v_{n,2}\Big)I_2
    +\Big(\sum_{n\in\mathcal{R}_2}v_{n,2}\alpha_n\Big)I'_2
    +\Big(\sum_{n\in\mathcal{R}_3}v_{n,3}\Big)I_3
    \nonumber\\
    &+\Big(\sum_{n\in\mathcal{R}_4}v_{n,4}\Big) I_4+\Big(\sum_{n\in\mathcal{R}_4}v_{n,4}\alpha_n\Big) I'_4,\\
V_2=&\sum_{n\in[7]}\alpha_n A^{[\boldsymbol{\Lambda}]}_n\\
=&\sum_{m\in[4]}\Big(\Big(\sum_{n\in\mathcal{R}_m}\frac{v_{n,m}u_{m,1}\alpha_n}{\alpha_n-f_1}\Big)\mathbf{W}_{m,(1)}^\mathsf{T}\boldsymbol{\lambda}_{m,(1)}+\Big(\sum_{n\in\mathcal{R}_m}\frac{v_{n,m}u_{m,2}\alpha_n}{\alpha_n-f_2}\Big)\mathbf{W}_{m,(2)}^\mathsf{T}\boldsymbol{\lambda}_{m,(2)}\Big)\nonumber\\
    &+\Big(\sum_{n\in\mathcal{R}_1}v_{n,1}\alpha_n\Big)I_1
    +\Big(\sum_{n\in\mathcal{R}_2}v_{n,2}\alpha_n\Big)I_2
    +\Big(\sum_{n\in\mathcal{R}_2}v_{n,2}\alpha^2_n\Big)I'_2
    +\Big(\sum_{n\in\mathcal{R}_3}v_{n,3}\alpha_n\Big)I_3
    \nonumber\\
    &+\Big(\sum_{n\in\mathcal{R}_4}v_{n,4}\alpha_n\Big) I_4+\Big(\sum_{n\in\mathcal{R}_4}v_{n,4}\alpha^2_n\Big) I'_4.
\end{align}
It is easy to verify that for all $m\in[4],\ell\in\{1,2\}$,
\begin{align}
\sum_{n\in\mathcal{R}_m} \frac{v_{n,m}u_{m,l}}{\alpha_n-f_l}=-1,\
\sum_{n\in\mathcal{R}_m} \frac{v_{n,m}u_{m,l}\alpha_n}{\alpha_n-f_l}=-f_l,\label{eq:ex1cvde}
\end{align}
and
\begin{subequations}\label{eq:ex1grs}
    \begin{align}
    &\sum_{n\in\mathcal{R}_1}v_{n,1}=0,\
    \sum_{n\in\mathcal{R}_1}v_{n,1}\alpha_n=0,\label{eq:ex1grs1}\\
    &\sum_{n\in\mathcal{R}_2}v_{n,2}=0,\
    \sum_{n\in\mathcal{R}_2}v_{n,2}\alpha_n=0,\
    \sum_{n\in\mathcal{R}_2}v_{n,2}\alpha_n^2=0,\label{eq:ex1grs2}\\
    &\sum_{n\in\mathcal{R}_3}v_{n,3}=0,\
    \sum_{n\in\mathcal{R}_3}v_{n,3}\alpha_n=0,\label{eq:ex1grs3}\\
    &\sum_{n\in\mathcal{R}_4}v_{n,4}=0,\
    \sum_{n\in\mathcal{R}_4}v_{n,4}\alpha_n=0,\
    \sum_{n\in\mathcal{R}_4}v_{n,4}\alpha_n^2=0,\label{eq:ex1grs4}
    \end{align}
\end{subequations}
where \eqref{eq:ex1cvde} is essentially due to a Vandermonde decomposition of Cauchy matrices (Lemma \ref{lemma:vdc} in Appendix), and \eqref{eq:ex1grs} follows from a dual GRS codes structure (Lemma \ref{lemma:grs} in Appendix). Therefore,
\begin{align}
V_1=&-\Big(\sum_{m\in[4]}\mathbf{W}_{m,(1)}^\mathsf{T}\boldsymbol{\lambda}_{m,(1)}\Big)-\Big(\sum_{m\in[4]}\mathbf{W}_{m,(2)}^\mathsf{T}\boldsymbol{\lambda}_{m,(2)}\Big),\\
V_2=&-f_1\Big(\sum_{m\in[4]}\mathbf{W}_{m,(1)}^\mathsf{T}\boldsymbol{\lambda}_{m,(1)}\Big)-f_2\Big(\sum_{m\in[4]}\mathbf{W}_{m,(2)}^\mathsf{T}\boldsymbol{\lambda}_{m,(2)}\Big).
\end{align}
Now the user is ready to recover the two symbols of the desired linear combination as follows.
\begin{align}
    \begin{bmatrix}
        \sum_{m\in[4]}\mathbf{W}_{m,(1)}^\mathsf{T}\boldsymbol{\lambda}_{m,(1)}\\
        \sum_{m\in[4]}\mathbf{W}_{m,(2)}^\mathsf{T}\boldsymbol{\lambda}_{m,(2)}
    \end{bmatrix}=-
    \begin{bmatrix}
        1 & 1\\
        f_1 & f_2
    \end{bmatrix}^{-1}
    \begin{bmatrix}
        V_1\\
        V_2
    \end{bmatrix}.
    \end{align}
Note that the user recovers two desired symbols from a total of $7$ downloaded symbols, the rate achieved is thus $2/7$, which matches Lemma \ref{lemma:asymm}.
\subsubsection{Motivating Example 2}\label{sec:asymmex2}
To make the general proof more accessible, let us consider another motivating example where we have $N=9$ servers and $K$ messages, $X_1=T_1=1, X_2=T_2=2$. The storage pattern for this example is defined as follows.
\begin{subequations}
    \begin{align}
        \mathcal{R}_1&=\{1,2,3,7\},\\
        \mathcal{R}_2&=\{3,4,5,6,8,9\}.
    \end{align}
\end{subequations}
The $K$ messages are partitioned into $M=2$ disjoint sets $\mathcal{W}_1$ and $\mathcal{W}_2$, where 
\begin{subequations}
    \begin{align}
        \mathcal{W}_1=\{W_{1,1},W_{1,2},\ldots,W_{1,K_1}\},\\
        \mathcal{W}_2=\{W_{2,1},W_{2,2},\ldots,W_{2,K_2}\}.
    \end{align}
\end{subequations}
Each message is comprised of $L=2$ symbols from $\mathbb{F}_q, q\geq 11$, i.e.,
\begin{align}
    W_{m,k}=(W_{m,k}(1),W_{m,k}(2)),\ \forall m\in\{1,2\},k\in[K_m].
\end{align}
Let $\alpha_1,\alpha_2,\alpha_3,\alpha_4,\alpha_5,\alpha_9,f_1,f_2$ be $11$ distinct constants from the finite field $\mathbb{F}_q$. For all $m,\ell\in\{1,2\},n\in\mathcal{R}_m$, let us define the following constants.
\begin{align}
    v_{n,m}&=\prod_{n'\in\mathcal{R}_m\setminus\{n\}}{(\alpha_n-\alpha_{n'})^{-1}},\\
    u_{m,l}&=\prod_{n\in\mathcal{R}_m}{(f_l-\alpha_n)}.
\end{align}
Let $\mathbf{\dot{Z}}_{1,(1)}\in\mathbb{F}_q^{K_1},\mathbf{\dot{Z}}_{1,(2)}\in\mathbb{F}_q^{K_1},\mathbf{\dot{Z}}_{2,(1)}\in\mathbb{F}_q^{K_2},\mathbf{\dot{Z}}_{2,(2)}\in\mathbb{F}_q^{K_2},\mathbf{\ddot{Z}}_{2,(1)}\in\mathbb{F}_q^{K_2},\mathbf{\ddot{Z}}_{2,(2)}\in\mathbb{F}_q^{K_2}$ be i.i.d. uniform column row vectors that are independent of the messages. The storage at the $9$ servers, i.e. $\mathcal{S}_1,\mathcal{S}_2,\ldots,\mathcal{S}_9$  are constructed as follows.
\begin{subequations}
    \begin{align}
        \mathcal{S}_1&=\{\mathbf{W}^{(1)}_1\},\ &
        \mathcal{S}_2&=\{\mathbf{W}^{(2)}_1\},\ &
        \mathcal{S}_3&=\{\mathbf{W}^{(3)}_1,\mathbf{W}^{(3)}_2\},\\
        \mathcal{S}_4&=\{\mathbf{W}^{(4)}_2\},\ &
        \mathcal{S}_5&=\{\mathbf{W}^{(5)}_2\},\ &
        \mathcal{S}_6&=\{\mathbf{W}^{(6)}_2\},\\
        \mathcal{S}_7&=\{\mathbf{W}^{(7)}_1\},\ &
        \mathcal{S}_8&=\{\mathbf{W}^{(8)}_2\},\ &
        \mathcal{S}_9&=\{\mathbf{W}^{(9)}_2\},
    \end{align}
\end{subequations}
where for all $n\in\{1,2,3,7\}$,
\begin{align}
    \mathbf{W}^{(n)}_1=
    \begin{bmatrix}
        \frac{1}{\alpha_n-f_1}\mathbf{W}_{1,(1)}+\mathbf{\dot{Z}}_{1,(1)}\\
        \frac{1}{\alpha_n-f_2}\mathbf{W}_{1,(2)}+\mathbf{\dot{Z}}_{1,(2)}
    \end{bmatrix},
\end{align}
and for all $n\in\{3,4,5,6,8,9\}$,
\begin{align}
    \mathbf{W}^{(n)}_2=
    \begin{bmatrix}
        \frac{1}{\alpha_n-f_1}\mathbf{W}_{2,(1)}+\mathbf{\dot{Z}}_{2,(1)}+\alpha_n\mathbf{\ddot{Z}}_{2,(1)}\\
        \frac{1}{\alpha_n-f_2}\mathbf{W}_{2,(2)}+\mathbf{\dot{Z}}_{2,(2)}+\alpha_n\mathbf{\ddot{Z}}_{2,(2)}
    \end{bmatrix},
\end{align}
and for all $m\in[2],\ell\in[2]$,
\begin{align}
    \mathbf{W}_{m,(\ell)}=\left[W_{m,1}(\ell),\cdots,W_{m,k_m}(\ell)\right]^\mathsf{T}.
\end{align}
It is easy to see that the $X_1=1$ security constraint is guaranteed by the random noise vectors $\mathbf{\dot{Z}}_{1,(1)},\mathbf{\dot{Z}}_{1,(2)}$ protecting the messages $\mathcal{W}_1$, the $X_2=2$ security constraint is guaranteed by the MDS coded random noise vectors $\mathbf{\dot{Z}}_{2,(1)},\mathbf{\dot{Z}}_{2,(2)},\mathbf{\ddot{Z}}_{2,(1)},\mathbf{\ddot{Z}}_{2,(2)}$ protecting the messages $\mathcal{W}_2$. To retrieve the desired linear combination of messages, the user generates i.i.d. uniform column vectors $\mathbf{Z}'_{1,(1)}\in\mathbb{F}_q^{K_1},\mathbf{Z}'_{1,(2)}\in\mathbb{F}_q^{K_1},\mathbf{Z}'_{2,(1)}\in\mathbb{F}_q^{K_2},\mathbf{Z}'_{2,(2)}\in\mathbb{F}_q^{K_2},\mathbf{Z}''_{2,(1)}\in\mathbb{F}_q^{K_2},\mathbf{Z}''_{2,(2)}\in\mathbb{F}_q^{K_2}$ privately that are independent of $\boldsymbol{\Lambda}$.
The queries $Q_{[N]}^{[\boldsymbol{\Lambda}]}$ are constructed as follows.
\begin{subequations}
    \begin{align}
        Q_1^{[\boldsymbol{\Lambda}]}&=
        \{\mathbf{Q}^{[\boldsymbol{\Lambda}]}_{1,1}\},&\
        Q_2^{[\boldsymbol{\Lambda}]}&=
        \{\mathbf{Q}^{[\boldsymbol{\Lambda}]}_{2,1}\},&\
        Q_3^{[\boldsymbol{\Lambda}]}&=
        \{\mathbf{Q}^{[\boldsymbol{\Lambda}]}_{3,1},\mathbf{Q}^{[\boldsymbol{\Lambda}]}_{3,2}\},\\
        Q_4^{[\boldsymbol{\Lambda}]}&=
        \{\mathbf{Q}^{[\boldsymbol{\Lambda}]}_{4,2}\},&\
        Q_5^{[\boldsymbol{\Lambda}]}&=
        \{\mathbf{Q}^{[\boldsymbol{\Lambda}]}_{5,2}\},&\
        Q_6^{[\boldsymbol{\Lambda}]}&=
        \{\mathbf{Q}^{[\boldsymbol{\Lambda}]}_{6,2}\},\\
        Q_7^{[\boldsymbol{\Lambda}]}&=
        \{\mathbf{Q}^{[\boldsymbol{\Lambda}]}_{7,1}\},&\
        Q_8^{[\boldsymbol{\Lambda}]}&=
        \{\mathbf{Q}^{[\boldsymbol{\Lambda}]}_{8,2}\},&\
        Q_9^{[\boldsymbol{\Lambda}]}&=
        \{\mathbf{Q}^{[\boldsymbol{\Lambda}]}_{9,2}\}.
    \end{align}
\end{subequations}
where for all $n\in\{1,2,3,7\}$,
\begin{align}
    \mathbf{Q}^{[\boldsymbol{\Lambda}]}_{n,1}=
    \begin{bmatrix}
        u_{1,1}\boldsymbol{\lambda}_{1,(1)}+(\alpha_n-f_1)\mathbf{Z}'_{1,(1)}\\
        u_{1,2}\boldsymbol{\lambda}_{1,(2)}+(\alpha_n-f_2)\mathbf{Z}'_{1,(2)}
    \end{bmatrix},
\end{align}
and for all $n\in\{3,4,5,6,8,9\}$,
\begin{align}
    \mathbf{Q}^{[\boldsymbol{\Lambda}]}_{n,2}=
    \begin{bmatrix}
        u_{2,1}\boldsymbol{\lambda}_{2,(1)}+(\alpha_n-f_1)(\mathbf{Z}'_{2,(1)}+\alpha_n\mathbf{Z}''_{2,(1)})\\
        u_{2,2}\boldsymbol{\lambda}_{2,(2)}+(\alpha_n-f_2)(\mathbf{Z}'_{2,(2)}+\alpha_n\mathbf{Z}''_{2,(2)})
    \end{bmatrix},
\end{align}
and for all $m\in[2],\ell\in[2]$,
\begin{align}
    \boldsymbol{\lambda}_{m,(\ell)}=[\lambda_{m,1,(\ell)},\cdots,\lambda_{m,K_m,(\ell)}]^\mathsf{T}.
\end{align}
Similarly, the $T_1=1$ and $T_2=2$ privacy constraints are guaranteed by the random noise vectors protecting the coefficients of the desired linear combination. Note that the desired linear combination is equivalently written in terms of $(\mathbf{W}_{m,(\ell)})_{m\in[2],\ell\in[2]}$ and $(\boldsymbol{\lambda}_{m,(\ell)})_{m\in[2],\ell\in[2]}$ as follows.
\begin{align}
    \lambda_{\boldsymbol{\Lambda}}(\mathcal{W})\hspace{-0.1cm}&=\hspace{-0.1cm}
    \begin{bmatrix}
        \sum_{k\in[K_1]} \lambda_{1,k,(1)}W_{1,k}(1)+\sum_{k\in[K_2]} \lambda_{2,k,(1)}W_{2,k}(1)\\
        \sum_{k\in[K_1]} \lambda_{1,k,(2)}W_{1,k}(2)+\sum_{k\in[K_2]} \lambda_{2,k,(2)}W_{2,k}(2)
    \end{bmatrix}\\&=
    \begin{bmatrix}
        \mathbf{W}_{1,(1)}^\mathsf{T}\boldsymbol{\lambda}_{1,(1)}+\mathbf{W}_{2,(1)}^\mathsf{T}\boldsymbol{\lambda}_{2,(1)}\\
        \mathbf{W}_{1,(2)}^\mathsf{T}\boldsymbol{\lambda}_{1,(2)}+\mathbf{W}_{2,(2)}^\mathsf{T}\boldsymbol{\lambda}_{2,(2)}
    \end{bmatrix}.
\end{align}

The query $Q_n^{[\boldsymbol{\Lambda}]}$ is sent to Server $n$ for all $n\in[9]$. Upon receiving the query $Q_n^{[\boldsymbol{\Lambda}]}$, Server $n, n\in \{1,2,7\}$ responds the user with the following answer.
\begin{align}
    A^{[\boldsymbol{\Lambda}]}_n
    =&v_{n,1}(\mathbf{W}^{(n)}_1)^\mathsf{T}\mathbf{Q}^{[\boldsymbol{\Lambda}]}_{n,1}\\
    =&\frac{v_{n,1}u_{1,1}}{\alpha_n-f_1}\mathbf{W}_{1,(1)}^\mathsf{T}\boldsymbol{\lambda}_{1,(1)}+\frac{v_{n,1}u_{1,2}}{\alpha_n-f_2}\mathbf{W}_{1,(2)}^\mathsf{T}\boldsymbol{\lambda}_{1,(2)}+v_{n,1}I_{1,1}+v_{n,1}\alpha_n I_{1,2},
\end{align}
where
\begin{align}
    I_{1,1}=&\mathbf{W}_{1,(1)}^\mathsf{T}\mathbf{Z}'_{1,(1)}+u_{1,1}\mathbf{\dot{Z}}_{1,(1)}^\mathsf{T}\boldsymbol{\lambda}_{1,(1)}-f_1\mathbf{\dot{Z}}_{1,(1)}^\mathsf{T}\mathbf{Z}'_{1,(1)}\nonumber\\
    &+\mathbf{W}_{1,(2)}^\mathsf{T}\mathbf{Z}'_{1,(2)}+u_{1,2}\mathbf{\dot{Z}}_{1,(2)}^\mathsf{T}\boldsymbol{\lambda}_{1,(2)}-f_2\mathbf{\dot{Z}}_{1,(2)}^\mathsf{T}\mathbf{Z}'_{1,(2)},\\
    I_{1,2}=&\mathbf{\dot{Z}}_{1,(1)}^\mathsf{T}\mathbf{Z}'_{1,(1)}+\mathbf{\dot{Z}}_{1,(2)}^\mathsf{T}\mathbf{Z}'_{1,(2)}.
\end{align}
The answer returned by Server $3$ is constructed as follows.
\begin{align}
    A^{[\boldsymbol{\Lambda}]}_3
    =&v_{3,1}(\mathbf{W}^{(3)}_1)^\mathsf{T}\mathbf{Q}^{[\boldsymbol{\Lambda}]}_{3,1}+v_{3,2}(\mathbf{W}^{(3)}_2)^\mathsf{T}\mathbf{Q}^{[\boldsymbol{\Lambda}]}_{3,2}\\ 
    =&\frac{v_{3,1}u_{1,1}}{\alpha_3-f_1}\mathbf{W}_{1,(1)}^\mathsf{T}\boldsymbol{\lambda}_{1,(1)}
    +\frac{v_{3,1}u_{1,2}}{\alpha_3-f_2}\mathbf{W}_{1,(2)}^\mathsf{T}\boldsymbol{\lambda}_{1,(2)}
    +\frac{v_{3,2}u_{2,1}}{\alpha_3-f_1}\mathbf{W}_{2,(1)}^\mathsf{T}\boldsymbol{\lambda}_{2,(1)}\nonumber\\
    &+\frac{v_{3,2}u_{2,2}}{\alpha_3-f_2}\mathbf{W}_{2,(2)}^\mathsf{T}\boldsymbol{\lambda}_{2,(2)}
    +v_{3,1}I_{1,1}+v_{3,1}\alpha_3 I_{1,2}+v_{3,2}I_{2,1}+v_{3,2}\alpha_3 I_{2,2}\nonumber\\
    &+v_{3,2}\alpha^2_3 I_{2,3}+v_{3,2}\alpha^3_3 I_{2,4},
\end{align}
where 
\begin{align}
    I_{2,1}=&\mathbf{W}_{2,(1)}^\mathsf{T}\mathbf{Z}'_{2,(1)}+u_{2,1}\mathbf{\dot{Z}}_{2,(1)}^\mathsf{T}\boldsymbol{\lambda}_{2,(1)}+u_{2,1}\mathbf{\ddot{Z}}_{2,(1)}^\mathsf{T}\boldsymbol{\lambda}_{2,(1)}-f_1\mathbf{\dot{Z}}_{2,(1)}^\mathsf{T}\mathbf{Z}'_{2,(1)}\nonumber\\
    &+\mathbf{W}_{2,(2)}^\mathsf{T}\mathbf{Z}'_{2,(2)}+u_{2,2}\mathbf{\dot{Z}}_{2,(2)}^\mathsf{T}\boldsymbol{\lambda}_{2,(2)}+u_{2,2}\mathbf{\ddot{Z}}_{2,(2)}^\mathsf{T}\boldsymbol{\lambda}_{2,(2)}-f_2\mathbf{\dot{Z}}_{2,(2)}^\mathsf{T}\mathbf{Z}'_{2,(2)},\\
    I_{2,2}=&\mathbf{W}_{2,(1)}^\mathsf{T}\mathbf{Z}''_{2,(1)}+\alpha_n\mathbf{\dot{Z}}_{2,(1)}^\mathsf{T}\mathbf{Z}'_{2,(1)}-f_1\mathbf{\dot{Z}}_{2,(1)}^\mathsf{T}\mathbf{Z}''_{2,(1)}-f_1\mathbf{\ddot{Z}}_{2,(1)}^\mathsf{T}\mathbf{Z}'_{2,(1)}\nonumber\\
    &+\mathbf{W}_{2,(2)}^\mathsf{T}\mathbf{Z}''_{2,(2)}+\alpha_n\mathbf{\dot{Z}}_{2,(2)}^\mathsf{T}\mathbf{Z}'_{2,(2)}-f_2\mathbf{\dot{Z}}_{2,(2)}^\mathsf{T}\mathbf{Z}''_{2,(2)}-f_2\mathbf{\ddot{Z}}_{2,(2)}^\mathsf{T}\mathbf{Z}'_{2,(2)},\\
    I_{2,3}=&\mathbf{\dot{Z}}_{2,(1)}^\mathsf{T}\mathbf{Z}''_{2,(1)}+\mathbf{\ddot{Z}}_{2,(1)}^\mathsf{T}\mathbf{Z}'_{2,(1)}-f_1\mathbf{\ddot{Z}}_{2,(1)}^\mathsf{T}\mathbf{Z}''_{2,(1)}\nonumber\\
    &+\mathbf{\dot{Z}}_{2,(2)}^\mathsf{T}\mathbf{Z}''_{2,(2)}+\mathbf{\ddot{Z}}_{2,(2)}^\mathsf{T}\mathbf{Z}'_{2,(2)}-f_2\mathbf{\ddot{Z}}_{2,(2)}^\mathsf{T}\mathbf{Z}''_{2,(2)},\\
    I_{2,4}=&\mathbf{\ddot{Z}}_{2,(1)}^\mathsf{T}\mathbf{Z}''_{2,(1)}+\mathbf{\ddot{Z}}_{2,(2)}^\mathsf{T}\mathbf{Z}''_{2,(2)}.
\end{align}
The answer returned by Server $n, n\in\{4,5,6,8,9\}$ is constructed as follows.
\begin{align}
    A^{[\boldsymbol{\Lambda}]}_n
    =&v_{n,2}(\mathbf{W}^{(n)}_2)^\mathsf{T}\mathbf{Q}^{[\boldsymbol{\Lambda}]}_{n,2}\\ 
    =&\frac{v_{n,2}u_{2,1}}{\alpha_n-f_1}\mathbf{W}_{2,(1)}^\mathsf{T}\boldsymbol{\lambda}_{2,(1)}+\frac{v_{n,2}u_{2,2}}{\alpha_n-f_2}\mathbf{W}_{2,(2)}^\mathsf{T}\boldsymbol{\lambda}_{2,(2)}+v_{n,2}I_{2,1}+v_{n,2}\alpha_n I_{2,2}\nonumber\\
    &+v_{n,2}\alpha^2_n I_{2,3}+v_{n,2}\alpha^3_n I_{2,4}.
\end{align}
Upon the receiving the answers, to recover the desired linear combination, the user firstly evaluates two quantities $V_1$ and $V_2$ as follows.
\begin{align}
V_1=&\sum_{n\in[9]}A^{[\boldsymbol{\Lambda}]}_n\\
=&\Big(\sum_{n\in\mathcal{R}_1}\frac{v_{n,1}u_{1,1}}{\alpha_n-f_1}\Big)\mathbf{W}_{1,(1)}^\mathsf{T}\boldsymbol{\lambda}_{1,(1)}+\Big(\sum_{n\in\mathcal{R}_1}\frac{v_{n,1}u_{1,2}}{\alpha_n-f_2}\Big)\mathbf{W}_{1,(2)}^\mathsf{T}\boldsymbol{\lambda}_{1,(2)}\nonumber\\
&+\Big(\sum_{n\in\mathcal{R}_2}\frac{v_{n,2}u_{2,1}}{\alpha_n-f_1}\Big)\mathbf{W}_{2,(1)}^\mathsf{T}\boldsymbol{\lambda}_{2,(1)}+\Big(\sum_{n\in\mathcal{R}_2}\frac{v_{n,2}u_{2,2}}{\alpha_n-f_2}\Big)\mathbf{W}_{2,(2)}^\mathsf{T}\boldsymbol{\lambda}_{2,(2)}\nonumber\\
    &+\Big(\sum_{n\in\mathcal{R}_1}v_{n,1}\Big)I_{1,1}+\Big(\sum_{n\in\mathcal{R}_1}v_{n,1}\alpha_n\Big) I_{1,2}+\Big(\sum_{n\in\mathcal{R}_2}v_{n,2}\Big)I_{2,1}+\Big(\sum_{n\in\mathcal{R}_2}v_{n,2}\alpha_n\Big) I_{2,2}\nonumber\\
    &+\Big(\sum_{n\in\mathcal{R}_2}v_{n,2}\alpha^2_n\Big) I_{2,3}+\Big(\sum_{n\in\mathcal{R}_2}v_{n,2}\alpha^3_n\Big) I_{2,4},\\
V_2=&\sum_{n\in[9]}\alpha_n A^{[\boldsymbol{\Lambda}]}_n\\
=&\Big(\sum_{n\in\mathcal{R}_1}\frac{v_{n,1}u_{1,1}\alpha_n}{\alpha_n-f_1}\Big)\mathbf{W}_{1,(1)}^\mathsf{T}\boldsymbol{\lambda}_{1,(1)}+\Big(\sum_{n\in\mathcal{R}_1}\frac{v_{n,1}u_{1,2}\alpha_n}{\alpha_n-f_2}\Big)\mathbf{W}_{1,(2)}^\mathsf{T}\boldsymbol{\lambda}_{1,(2)}\nonumber\\
&+\Big(\sum_{n\in\mathcal{R}_2}\frac{v_{n,2}u_{2,1}\alpha_n}{\alpha_n-f_1}\Big)\mathbf{W}_{2,(1)}^\mathsf{T}\boldsymbol{\lambda}_{2,(1)}+\Big(\sum_{n\in\mathcal{R}_2}\frac{v_{n,2}u_{2,2}\alpha_n}{\alpha_n-f_2}\Big)\mathbf{W}_{2,(2)}^\mathsf{T}\boldsymbol{\lambda}_{2,(2)}\nonumber\\
    &+\Big(\sum_{n\in\mathcal{R}_1}v_{n,1}\alpha_n\Big)I_{1,1}+\Big(\sum_{n\in\mathcal{R}_1}v_{n,1}\alpha^2_n\Big) I_{1,2}+\Big(\sum_{n\in\mathcal{R}_2}v_{n,2}\alpha_n\Big)I_{2,1}+\Big(\sum_{n\in\mathcal{R}_2}v_{n,2}\alpha^2_n\Big) I_{2,2}\nonumber\\
    &+\Big(\sum_{n\in\mathcal{R}_2}v_{n,2}\alpha^3_n\Big) I_{2,3}+\Big(\sum_{n\in\mathcal{R}_2}v_{n,2}\alpha^4_n\Big) I_{2,4}
\end{align}
Again, due to Lemma \ref{lemma:grs} and Lemma \ref{lemma:vdc}, for all $m\in\{1,2\},\ell\in\{1,2\}$, we have the following identities.
\begin{align}
\sum_{n\in\mathcal{R}_m} \frac{v_{n,m}u_{m,l}}{\alpha_n-f_l}=-1,\
\sum_{n\in\mathcal{R}_m} \frac{v_{n,m}u_{m,l}\alpha_n}{\alpha_n-f_l}=-f_l,\label{eq:excvde}
\end{align}
and
\begin{subequations}\label{eq:exgrs1}
    \begin{align}
    &\sum_{n\in\mathcal{R}_1}v_{n,1}=0,\
    \sum_{n\in\mathcal{R}_1}v_{n,1}\alpha_n=0,\
    \sum_{n\in\mathcal{R}_1}v_{n,1}\alpha_n^2=0,\\
    &\sum_{n\in\mathcal{R}_2}v_{n,2}=0,\
    \sum_{n\in\mathcal{R}_2}v_{n,2}\alpha_n=0,\
    \sum_{n\in\mathcal{R}_2}v_{n,2}\alpha_n^2=0,\\
    &\sum_{n\in\mathcal{R}_2}v_{n,2}\alpha_n^3=0,\
    \sum_{n\in\mathcal{R}_2}v_{n,2}\alpha_n^4=0.
    \end{align}
\end{subequations}
Therefore,
\begin{align}
V_1=&-\left(\mathbf{W}_{1,(1)}^\mathsf{T}\boldsymbol{\lambda}_{1,(1)}+\mathbf{W}_{2,(1)}^\mathsf{T}\boldsymbol{\lambda}_{2,(1)}\right)-\left(\mathbf{W}_{1,(2)}^\mathsf{T}\boldsymbol{\lambda}_{1,(2)}+\mathbf{W}_{2,(2)}^\mathsf{T}\boldsymbol{\lambda}_{2,(2)}\right),\\
V_2=&-f_1\left(\mathbf{W}_{1,(1)}^\mathsf{T}\boldsymbol{\lambda}_{1,(1)}+\mathbf{W}_{2,(1)}^\mathsf{T}\boldsymbol{\lambda}_{2,(1)}\right)
-f_2\left(\mathbf{W}_{1,(2)}^\mathsf{T}\boldsymbol{\lambda}_{1,(2)}+\mathbf{W}_{2,(2)}^\mathsf{T}\boldsymbol{\lambda}_{2,(2)}\right).
\end{align}
Now the user is ready to recover the two symbols of the desired linear combination as follows.
\begin{align}
    \begin{bmatrix}
        \mathbf{W}_{1,(1)}^\mathsf{T}\boldsymbol{\lambda}_{1,(1)}+\mathbf{W}_{2,(1)}^\mathsf{T}\boldsymbol{\lambda}_{2,(1)}\\
        \mathbf{W}_{1,(2)}^\mathsf{T}\boldsymbol{\lambda}_{1,(2)}+\mathbf{W}_{2,(2)}^\mathsf{T}\boldsymbol{\lambda}_{2,(2)}
    \end{bmatrix}=-
    \begin{bmatrix}
        1 & 1\\
        f_1 & f_2
    \end{bmatrix}^{-1}
    \begin{bmatrix}
        V_1\\
        V_2
    \end{bmatrix}.
    \end{align}
Note that the user recovers two desired symbols from a total of $9$ downloaded symbols, the rate achieved is thus $2/9$, which matches Lemma \ref{lemma:asymm}.

\subsubsection{The General Scheme}
Let us assume $\min_{m\in[M]}(\rho_m-X_m-T_m)>0$ and set $L=\min_{m\in[M]}(\rho_m-X_m-T_m)$. The scheme needs a total of $N+L$ distinct constants $\alpha_1,\ldots,\alpha_N$ and $f_1,f_2,\cdots,f_L$ from a finite field $\mathbb{F}_q$, thus we require $q\ge N+L$. 
For all $m\in[M],\ell\in[L]$, let us define
\begin{align}
    \mathbf{W}_{m,(\ell)}&=[W_{m,1}(\ell),W_{m,2}(\ell),\cdots,W_{m,K_m}(\ell)]^\mathsf{T},\\
    \boldsymbol{\lambda}_{m,(\ell)}&=[\lambda_{m,1,(\ell)},\cdots,\lambda_{m,K_m,(\ell)}]^\mathsf{T}.
\end{align}
Therefore, the desired linear combination can be equivalently written as
\begin{align}
    \lambda_{\boldsymbol{\Lambda}}(\mathcal{W})&=
    \begin{bmatrix}
        \sum_{m\in[M]}\sum_{k\in[K_m]} \lambda_{m,k,(1)}W_{m,k}(1)\\
        \sum_{m\in[M]}\sum_{k\in[K_m]} \lambda_{m,k,(2)}W_{m,k}(2)\\
        \vdots\\
        \sum_{m\in[M]}\sum_{k\in[K_m]} \lambda_{m,k,(L)}W_{m,k}(L)\\
    \end{bmatrix}\\&=
    \left[
        \sum_{m\in[M]}\mathbf{W}_{m,(1)}^\mathsf{T}\boldsymbol{\lambda}_{m,(1)},
        \sum_{m\in[M]}\mathbf{W}_{m,(2)}^\mathsf{T}\boldsymbol{\lambda}_{m,(2)},
        \cdots,
        \sum_{m\in[M]}\mathbf{W}_{m,(L)}^\mathsf{T}\boldsymbol{\lambda}_{m,(L)}
    \right]^\mathsf{T}.
\end{align}
For all $m\in[M],x\in[X_m],\ell\in[L]$, let $\mathbf{Z}_{m,x,(\ell)}$ be i.i.d. uniform column vectors from $\mathbb{F}^{K_m}_q$ that are independent of messages. 
For all $n\in[N]$, the storage at Server $n$ is $\mathcal{S}_n=\{\mathbf{W}^{(n)}_{m} | m\in\mathcal{M}_n\}$, where for all $m\in\mathcal{M}_n$,
\begin{align}
    \mathbf{W}^{(n)}_{m}=\begin{bmatrix}
        \frac{1}{\alpha_{n}-f_1}\mathbf{W}_{m,(1)}+\sum_{x\in[X_m]}\alpha_{n}^{x-1}\mathbf{Z}_{m,x,(1)}\\
        \frac{1}{\alpha_{n}-f_2}\mathbf{W}_{m,(2)}+\sum_{x\in[X_m]}\alpha_{n}^{x-1}\mathbf{Z}_{m,x,(2)}\\
        \vdots\\
        \frac{1}{\alpha_{n}-f_L}\mathbf{W}_{m,(L)}+\sum_{x\in[X_m]}\alpha_{n}^{x-1}\mathbf{Z}_{m,x,(L)}\\
    \end{bmatrix}  
    .\label{eq:svec}
\end{align}
Note that the share $\{\mathbf{W}^{(n)}_{m}\}_{n\in\mathcal{R}_m}$ is $X_m$-secure due to the MDS($\rho_m,X_m$) coded random noise vectors protecting the message vectors $\mathbf{W}_{m,(\ell)}$. 
For all $m\in[M],\ell\in[L],t\in[T_m]$, let $\mathbf{Z'}_{m,t,(\ell)}$ be i.i.d. uniform column vectors from $\mathbb{F}^{K_m}_q$ that are independent of the coefficients $\boldsymbol{\Lambda}$, generated by the user privately.  
For all $m\in[M],\ell\in[L]$, let us define the following constants,
\begin{align}
    u_{m,\ell}=\prod_{n\in\mathcal{R}_m}(f_\ell-\alpha_{n}).
\end{align}
For all $n\in[N]$, the query sent to Server $n$ is $Q^{[\boldsymbol{\Lambda}]}_n=\{\mathbf{Q}^{[\boldsymbol{\Lambda}]}_{n,m} | m\in\mathcal{M}_n\}$, where $m\in\mathcal{M}_n$,
\begin{align}
    \mathbf{Q}^{[\boldsymbol{\Lambda}]}_{n,m}=\begin{bmatrix}
    u_{m,1}\boldsymbol{\lambda}_{m,(1)}+(\alpha_{n}-f_1)\sum_{t\in[T_m]}\alpha_{n}^{t-1} \mathbf{Z'}_{m,t,(1)}\\
    u_{m,2}\boldsymbol{\lambda}_{m,(2)}+(\alpha_{n}-f_2)\sum_{t\in[T_m]}\alpha_{n}^{t-1} \mathbf{Z'}_{m,t,(2)}\\
    \vdots\\
    u_{m,L}\boldsymbol{\lambda}_{m,(L)}+(\alpha_{n}-f_L)\sum_{t\in[T_m]}\alpha_{n}^{t-1} \mathbf{Z'}_{m,t,(L)}
\end{bmatrix}.
    \label{eq:qvec}
\end{align}
Similarly, the query $\{\mathbf{Q}^{[\boldsymbol{\Lambda}]}_{n,m}\}_{n\in\mathcal{R}_m}$ for the $m^{th}$ message set is $T_m$-private since $\boldsymbol{\lambda}_{m,(\ell)}$ is protected by the MDS($\rho_m,T_m$) coded random noise vectors. 
For all $m\in[M],n\in\mathcal{R}_m$, let us define the following constants,
\begin{align}
    v_{n,m}=\prod_{n'\in\mathcal{R}_m\setminus\{n\}}(\alpha_n-\alpha_{n'})^{-1}.\label{vmn}
\end{align}
Finally, for all $n\in[N]$, the answer returned by Server $n$ is constructed as follows.
\begin{align}
    A^{[\boldsymbol{\Lambda}]}_{n}
    =&\sum_{m\in\mathcal{M}_n}v_{n,m}(\mathbf{W}^{(n)}_{m})^\mathsf{T}\mathbf{Q}^{[\boldsymbol{\Lambda}]}_{n,m}\label{ans}\\
    =&\sum_{m\in\mathcal{M}_n}\sum_{\ell\in[L]}v_{n,m}\Big(\frac{1}{\alpha_{n}-f_\ell}\mathbf{W}_{m,(\ell)}+\sum_{x\in[X_m]}\alpha_{n}^{x-1}\mathbf{Z}_{m,x,(\ell)}\Big)^\mathsf{T}\nonumber\\
    &\hspace{2.5cm}\times\Big(u_{m,\ell}\boldsymbol{\lambda}_{m,(\ell)}+(\alpha_{n}-f_\ell)\sum_{t\in[T_m]}\alpha_{n}^{t-1} \mathbf{Z'}_{m,t,(\ell)}\Big)\\ 
    =&\sum_{m\in\mathcal{M}_n}\sum_{\ell\in[L]}v_{n,m}\Big(\frac{u_{m,\ell}}{\alpha_{n}-f_\ell}\mathbf{W}_{m,(\ell)}^\mathsf{T}\boldsymbol{\lambda}_{m,(\ell)}+u_{m,\ell}\sum_{x\in[X_m]}\alpha_{n}^{x-1}\mathbf{Z}_{m,x,(\ell)}^\mathsf{T}\boldsymbol{\lambda}_{m,(\ell)}\nonumber\\
    &\hspace{3cm}+\sum_{t\in[T_m]}\alpha_{n}^{t-1}\mathbf{W}_{m,(\ell)}^\mathsf{T}\mathbf{Z'}_{m,t,(\ell)}\nonumber\\
    &\hspace{3cm}+(\alpha_{n}-f_\ell)\sum_{x\in[X_m]}\sum_{t\in[T_m]}\alpha_{n}^{x+t-2}\mathbf{Z}_{m,x,(\ell)}^\mathsf{T}\mathbf{Z'}_{m,t,(\ell)}\Big)\\ 
    =&\sum_{m\in\mathcal{M}_n}\sum_{\ell\in[L]}v_{n,m}\Big(\frac{u_{m,\ell}}{\alpha_n-f_\ell}\mathbf{W}_{m,(\ell)}^\mathsf{T}\boldsymbol{\lambda}_{m,(\ell)}+\sum_{z\in[X_m+T_m]} \alpha^{z-1}_n I_{m,z,(\ell)}\Big)\\
    =&\sum_{m\in\mathcal{M}_n}\Big(\sum_{\ell\in[L]}\frac{v_{n,m}u_{m,\ell}}{\alpha_n-f_\ell}\mathbf{W}_{m,(\ell)}^\mathsf{T}\boldsymbol{\lambda}_{m,(\ell)}+\sum_{z\in[X_m+T_m]} v_{n,m}\alpha^{z-1}_n I'_{m,z}\Big),
\end{align}
where for all $z\in[X_m+T_m]$, the interference terms $I_{m,z,(\ell)}$ are various linear combinations of inner products of $\mathbf{W}_{m,(\ell)}$, $\boldsymbol{\lambda}_{m,(\ell)}$, $(\mathbf{Z}_{m,x,(\ell)})_{x\in[X_m]}$ and $(\mathbf{Z'}_{m,t,(\ell)})_{t\in[T_m]}$, whose exact forms are irrelevant. To see the correctness of our scheme, we note that upon receiving the answers, the user can evaluate the quantities $V_1,\cdots,V_L$ defined as $V_i=\sum_{n\in[N]}\alpha^{i-1}_n A^{[\boldsymbol{\Lambda}]}_n, i\in[L].$
Note that for all $i\in[L]$, we have
\begin{align}
    V_i
    =&\sum_{n\in[N]}\alpha^{i-1}_n A^{[\boldsymbol{\Lambda}]}_n\\
    =&\sum_{m\in\mathcal{M}_n}\Big(\sum_{\ell\in[L]}\frac{v_{n,m}u_{m,\ell}\alpha^{i-1}_n}{\alpha_n-f_\ell}\mathbf{W}_{m,(\ell)}^\mathsf{T}\boldsymbol{\lambda}_{m,(\ell)}+\sum_{z\in[X_m+T_m]} v_{n,m}\alpha^{z-1}_n I'_{m,z}\Big)\\
    =&\sum_{m\in[M]}\sum_{\ell\in[L]}\mathbf{W}_{m,(\ell)}^\mathsf{T}\boldsymbol{\lambda}_{m,(\ell)}
    \Big(\sum_{n\in\mathcal{R}_m}
    \frac{v_{n,m}u_{m,\ell}\alpha_n^{i-1}}{\alpha_n-f_\ell}\Big)\nonumber\\
    &+\sum_{m\in[M]}\sum_{z\in[X_m+T_m]} \Big(\sum_{n\in\mathcal{R}_m}v_{n,m}\alpha^{i+z-2}_n\Big)I'_{m,z}.
    \label{eq:ia1}
\end{align} 
According to Lemma \ref{lemma:grs} in Appendix, namely, the dual GRS codes structure, since $0\le i+z-2\le L+X_m+T_m-2 = \rho_{\textrm{m}}-2$, for all $m\in[M],i\in[L],z\in[X_m+T_m]$, we have
\begin{align}
    \sum_{n\in\mathcal{R}_m}v_{n,m}\alpha^{i+z-2}_n=0,
\end{align}
i.e., the second term in \eqref{eq:ia1} is zero, therefore for all $i\in[L]$,
\begin{align}
    V_i=\sum_{m\in[M]}\sum_{\ell\in[L]}\mathbf{W}_{m,(\ell)}^\mathsf{T}\boldsymbol{\lambda}_{m,(\ell)}
    \Big(\sum_{n\in\mathcal{R}_m}
    \frac{v_{n,m}u_{m,\ell}\alpha_n^{i-1}}{\alpha_n-f_\ell}\Big).
\end{align}   
For all $m\in[M]$, define
\begin{align}
    &\mathbf{C}_m=\begin{bmatrix}
        \frac{1}{\alpha_{\mathcal{R}_m(1)}-f_1} & \frac{1}{\alpha_{\mathcal{R}_m(1)}-f_2} & \cdots & \frac{1}{\alpha_{\mathcal{R}_m(1)}-f_L}\\
        \frac{1}{\alpha_{\mathcal{R}_m(2)}-f_1} & \frac{1}{\alpha_{\mathcal{R}_m(2)}-f_2} & \cdots & \frac{1}{\alpha_{\mathcal{R}_m(2)}-f_L}\\
        \vdots & \vdots & \vdots & \vdots \\
        \frac{1}{\alpha_{\mathcal{R}_m(\rho_m)}-f_1} & \frac{1}{\alpha_{\mathcal{R}_m(\rho_m)}-f_2} & \cdots & \frac{1}{\alpha_{\mathcal{R}_m(\rho_m)}-f_L}\\
    \end{bmatrix},\\
    &\mathbf{V}_m=
    \begin{bmatrix}
        1 & 1 & \cdots & 1\\
        \alpha_{\mathcal{R}_m(1)} & \alpha_{\mathcal{R}_m(2)} &\cdots & \alpha_{\mathcal{R}_m(\rho_m)}\\ 
        \vdots & \vdots & \vdots & \vdots \\
        \alpha_{\mathcal{R}_m(1)}^{L-1} & \alpha_{\mathcal{R}_m(2)}^{L-1} & \cdots & \alpha_{\mathcal{R}_m(\rho_m)}^{L-1}
    \end{bmatrix},\\
    &\mathbf{D}_m^v={\rm diag}(v_{m,\mathcal{R}_m(1)}, v_{m,\mathcal{R}_m(2)} \cdots, v_{m,\mathcal{R}_m(\rho_m)}),\\
    &\mathbf{D}_m^u={\rm diag}(u_{m,1},u_{m,2},\ldots,u_{m,L}).
\end{align}
Then we can rewrite $V_1,V_2,\cdots,V_L$ in the following matrix form,
\begin{align}
    &[V_1,V_2,\ldots,V_L]^\mathsf{T}\notag\\
    =&\sum_{m\in[M]}
    \mathbf{V}_m
    \mathbf{D}_m^v
    \mathbf{C}_m
    \mathbf{D}_m^u
    \left[
        \mathbf{W}_{m,(1)}^\mathsf{T}\boldsymbol{\lambda}_{m,(1)},
        \mathbf{W}_{m,(2)}^\mathsf{T}\boldsymbol{\lambda}_{m,(2)},
        \cdots,
        \mathbf{W}_{m,(L)}^\mathsf{T}\boldsymbol{\lambda}_{m,(L)}
    \right]^\mathsf{T}.\label{eq:vmatrix1}
\end{align}
According to the Vandermonde decomposition of Cauchy matrices in Lemma \ref{lemma:vdc} in Appendix, for each $m\in[M]$, we have
\begin{align}
    \mathbf{C}_m
    =-(\mathbf{D}_m^v)^{-1}
    \mathbf{V}_m^{-1}
    \begin{bmatrix}
        1 & 1 & \cdots & 1\\
        f_1 & f_2 & \cdots & f_L\\
        \vdots & \vdots & \vdots & \vdots \\
        f_1^{L-1} & f_2^{L-1} & \cdots & f_L^{L-1}
    \end{bmatrix}
    (\mathbf{D}_m^u)^{-1}.
\end{align}

Therefore, \eqref{eq:vmatrix1} can be simplified as follows.
\begin{align} 
    &\left[V_1,V_2,\cdots,V_L\right]^\mathsf{T}\notag\\
    =&-\begin{bmatrix}
        1 & 1 & \cdots & 1\\
        f_1 & f_2 & \cdots & f_L\\
        \vdots & \vdots & \cdots & \vdots\\
        f_1^{L-1} & f_2^{L-1} & \cdots & f_L^{L-1}\\
    \end{bmatrix}
    \begin{bmatrix}
        \sum_{m\in[M]}\mathbf{W}_{m,(1)}^\mathsf{T}\boldsymbol{\lambda}_{m,(1)}\\
        \sum_{m\in[M]}\mathbf{W}_{m,(2)}^\mathsf{T}\boldsymbol{\lambda}_{m,(2)}\\
        \vdots\\
        \sum_{m\in[M]}\mathbf{W}_{m,(L)}^\mathsf{T}\boldsymbol{\lambda}_{m,(L)}
    \end{bmatrix}.\label{eq:ia2}
\end{align}
Because the constants $(f_\ell)_{\ell\in[L]}$ are distinct, the Vandermonde matrix on the RHS of \eqref{eq:ia2} has full rank, the desired linear combination $\lambda_{\boldsymbol{\Lambda}}(\mathcal{W})=\left[\sum_{m\in[M]}\mathbf{W}_{m,(\ell)}^\mathsf{T}\boldsymbol{\lambda}_{m,(\ell)}\right]_{(\ell)\in[L]}$ is resolvable by inverting the Vandermonde matrix. Therefore, the user can reconstruct the desired linear combination from the answers $(A^{[\boldsymbol{\Lambda}]}_n)_{n\in[N]}$. Note that a total of $L=\min_{m\in[M]}(\rho_m-X_m-T_m)$ desired $q$-ary symbols are retrieved from a total of $N$ downloaded $q$-ary symbols, the rate achieved is $R=\min_{m\in[M]}(\rho_m-X_m-T_m)/N$. This completes the proof of Lemma \ref{lemma:asymm}.

\subsection{Achievability Proof of Theorem \ref{thm:main}}\label{sec:proofmain}
Throughout this section, we will only consider non-degenerated settings where $\min_{m\in[M]}\rho_m-X-T >0$ otherwise the asymptotic capacity is zero. As discussed at the beginning of this section, the main idea of our achievability proof is to identify an augmented system (a corresponding Asymm-GXSTPLC setting) that can be used to construct asymptotic capacity-achieving scheme for any given GXSTPLC problem by merging servers. To this end, we first need a couple of quantities that are related to the solution to the linear programming in \eqref{eq:lp}. Note that the existence of a rational solution to the problem \eqref{eq:lp} is always guaranteed due to the corner point theorem. In other words, the solution of a linear programming can always be represented by a corner point of the feasible region, i.e., a vertex of the feasible region. Since the coefficients of the linear programming in \eqref{eq:lp} are integer-valued, all of the corner points must be rational-valued. Let us fix one of such solutions as $(D_1^*,D_2^*,\cdots,D_N^*)=(q_1/p_1, q_2/p_2, \cdots, q_N/p_N)$ where for all $n\in[N]$, $p_n$ and $q_n$ are co-prime. Besides, since vertices of the feasible region of \eqref{eq:lp} are bounded by the $N$-dimensional hypercube $[0,1]^N$, we have $p_n\geq q_n$ for all $n\in[N]$. Let us set $L=\lcm(p_1, p_2, \cdots, p_N)$, and define $\tau_n^* = L*q_n/p_n\in \mathbb{Z}_{\geq 0}$ for all $n\in[N]$. Denote the storage pattern of the GXSTPLC problem with $N$ servers, $M$ message sets, security level $X$ and privacy level $T$ as $\mathcal{R}=\{\mathcal{R}_1,\mathcal{R}_2,\cdots,\mathcal{R}_M\}$, the \emph{augmented system} has a total of $\overline{N}=\sum_{n\in[N]}\tau_n^*$ servers and $\overline{M}=M$ message sets. For notational simplicity, the $\overline{N}$ servers $(1,2,\cdots, \overline{N})$ are equivalently double-indexed as $((1,1),(1,2),\cdots,(1,\tau_1^*), (2,1), (2,2), \cdots, (2,\tau_2^*),\cdots, (N,1), (N,2), \cdots, (N,\tau_N^*))$ to reflect their hierarchical structures. Algorithm \ref{alg:genaugsys} generates the storage pattern $\overline{\mathcal{R}}=\{\overline{\mathcal{R}}_1,\overline{\mathcal{R}}_2,\cdots,\overline{\mathcal{R}}_M\}$ and asymmetric security and privacy thresholds $\overline{\mathbf{X}}=(\overline{X}_1, \overline{X}_2, \cdots, \overline{X}_M)$, $\overline{\mathbf{T}}=(\overline{T}_1, \overline{T}_2, \cdots, \overline{T}_M)$ for the augmented system.

\begin{algorithm}
\caption{Generation of the augmented system}\label{alg:genaugsys}
\KwIn{$N, M, L, (\tau_n^*)_{n\in[N]}, \mathcal{R}, X, T$}
\KwOut{$\overline{\mathcal{R}}, \overline{\mathbf{X}}, \overline{\mathbf{T}}$}
\ForEach{$m\in[M]$}{
    $\overline{\mathcal{R}}_m\gets \emptyset$\;
    $\overline{\boldsymbol{\tau}}^*_m\gets$ sorted $(\tau_n^*)_{n\in\mathcal{R}_m}$ in non-ascending order\;
    $\gamma_m\gets$ $(X+T+1)^{th}$ element in $\overline{\boldsymbol{\tau}}^*_m$\;
    $\overline{X}_m\gets X\gamma_m$\;
    $\overline{T}_m\gets T\gamma_m$\;
    \ForEach{$n\in\mathcal{R}_m$}{
        $\delta_{m,n}\gets \min(\gamma_m, \tau^*_n)$\;
        $\overline{\mathcal{R}}_m\gets \overline{\mathcal{R}}_m\cup \{(n,1),(n,2),\cdots,(n,\delta_{m,n})\}$\;
    }
}
\end{algorithm}
Upon the generation of the augmented system, the corresponding GXSTPLC scheme is obtained by merging servers $(n,1),(n,2),\cdots,(n,\tau_n^*)$ into Server $n$ for all $n\in[N]$, i.e., assigning the storage and queries of the augmented system for servers $(n,1),(n,2),\cdots,(n,\tau_n^*)$ to Server $n$. Note that this indeed recovers the storage pattern of the original GXSTPLC problem since it is easy to verify that for all $n\in[N]$, we have $\bigcup_{i\in[\tau_n^*]}\overline{\mathcal{M}}_{(n,i)}\subseteq\mathcal{M}_n$ where $\overline{\mathcal{M}}_{(n,i)}=\{m\in[M] \mid \overline{\mathcal{R}}_m\ni (n,i)\}$ is the dual representation of the storage pattern of the augmented system $\overline{\mathcal{R}}$. The following two lemmas conclude our achievability proof by showing that the GXSTPLC scheme constructed from the augmented system is $X$-secure and $T$-private, and more importantly, achieves the asymptotic capacity.
\begin{lemma}
    The GXSTPLC scheme constructed from the augmented system (generated by Algorithm \ref{alg:genaugsys}) satisfies $X$-secure and $T$-private constraints.
\end{lemma}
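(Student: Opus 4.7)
The plan is to reduce $X$-security (respectively $T$-privacy) of the merged scheme to the $\overline{X}_m$-security (respectively $\overline{T}_m$-privacy) of the underlying Asymm-GXSTPLC scheme for each message set $m\in[M]$ separately, exploiting the key combinatorial inequality $\delta_{m,n}\leq\gamma_m$ that is built into Algorithm \ref{alg:genaugsys}.

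First I would invoke the message-set independence of storage \eqref{eq:storind} and of queries \eqref{eq:queryind} to decompose the security/privacy requirement across message sets. Since the secret shares of $\mathcal{W}_1,\mathcal{W}_2,\ldots,\mathcal{W}_M$ are generated independently (and the messages themselves are mutually independent), establishing $I(\mathcal{S}_{\mathcal{X}};\mathcal{W})=0$ for any $\mathcal{X}\subset[N]$ with $|\mathcal{X}|=X$ reduces, via a routine chain-rule computation, to verifying for each $m\in[M]$ separately that the shares of $\mathcal{W}_m$ held collectively by the servers in $\mathcal{X}$ reveal nothing about $\mathcal{W}_m$. An analogous decomposition holds for $T$-privacy of the queries.

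Next, fix $m\in[M]$ and a colluding set $\mathcal{X}\subset[N]$ with $|\mathcal{X}|=X$. By the merging rule, the shares of $\mathcal{W}_m$ held by servers in $\mathcal{X}$ in the merged GXSTPLC scheme are exactly the shares of $\mathcal{W}_m$ stored at the augmented servers
\begin{align}
\overline{\mathcal{X}}_m=\bigcup_{n\in\mathcal{X}\cap\mathcal{R}_m}\{(n,1),(n,2),\ldots,(n,\delta_{m,n})\}\subset\overline{\mathcal{R}}_m
\end{align}
in the Asymm-GXSTPLC scheme underlying the augmented system. The crucial counting bound
\begin{align}
|\overline{\mathcal{X}}_m|=\sum_{n\in\mathcal{X}\cap\mathcal{R}_m}\delta_{m,n}\leq |\mathcal{X}\cap\mathcal{R}_m|\,\gamma_m\leq X\gamma_m=\overline{X}_m
\end{align}
follows immediately from $\delta_{m,n}=\min(\gamma_m,\tau_n^*)\leq\gamma_m$ and $|\mathcal{X}\cap\mathcal{R}_m|\leq|\mathcal{X}|=X$. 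Hence $\overline{\mathcal{X}}_m$ is contained in some size-$\overline{X}_m$ collusion set of the augmented system, and the $\overline{X}_m$-security of the Asymm-GXSTPLC scheme from Lemma \ref{lemma:asymm} (together with monotonicity of mutual information) guarantees that these shares reveal nothing about $\mathcal{W}_m$.

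The identical counting argument applies to $T$-privacy: for any $\mathcal{T}\subset[N]$ with $|\mathcal{T}|=T$, the queries collectively received by servers in $\mathcal{T}$ correspond, for each $m$, to the Asymm-GXSTPLC queries at an augmented collusion set of size at most $T\gamma_m=\overline{T}_m$, which by Lemma \ref{lemma:asymm} reveals nothing about the coefficient vector $(\lambda_{m,k,(\ell)})_{k\in[K_m],\ell\in[L]}$; combining with the cross-message-set independence in \eqref{eq:queryind} completes the proof. I do not anticipate a significant obstacle in this lemma, since the entire argument rests on the single inequality $\delta_{m,n}\leq\gamma_m$, which is true by construction of Algorithm \ref{alg:genaugsys}, and on the independence of shares and queries across message sets that is built into the problem formulation. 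The more delicate question, namely that the asymmetric thresholds $\overline{X}_m,\overline{T}_m$ chosen by the algorithm are simultaneously \emph{large enough} to certify security and privacy yet \emph{small enough} to permit the achievable rate to match $C_\infty(\mathcal{R})$, is deferred to the companion rate lemma.
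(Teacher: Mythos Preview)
Your proposal is correct and follows essentially the same approach as the paper: decompose the security (and, symmetrically, privacy) requirement across message sets via the independence assumptions \eqref{eq:storind}, \eqref{eq:queryind}, then use the counting bound $\sum_{n\in\mathcal{X}\cap\mathcal{R}_m}\delta_{m,n}\leq X\gamma_m=\overline{X}_m$ to invoke the asymmetric security of the augmented scheme. The paper's proof is organized slightly differently (it writes out the mutual-information chain explicitly rather than naming the set $\overline{\mathcal{X}}_m$), but the substance and the key inequality are identical.
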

\begin{proof}
    Due to the symmetry between the definition of the structure of storage and queries, it suffices to prove $X$-security since $T$-privacy is similarly proved. Note that by the construction of the GXSTPLC scheme, for all $n\in[N]$,
    \begin{align}
        \mathcal{S}_n=\{\overline{\mathcal{S}}_{(n,1)},\overline{\mathcal{S}}_{(n,2)},\cdots,\overline{\mathcal{S}}_{(n,\tau^*_n)}\},
    \end{align}
    where $\overline{\mathcal{S}}_{(n,i)}$ denotes the storage at Server $(n,i)$ of the augmented system, $n\in[N], i\in[\tau_n^*]$. Recall the definition of the storage in \eqref{eq:defstor}, we can write
    \begin{align}
        \mathcal{S}_n=\left\{\widetilde{W}_{m,k}^{(n,i)} \mid i\in[\tau_n^*], m\in\overline{\mathcal{M}}_{(n,i)}, k\in[K_m]\right\}.
    \end{align}
    Therefore, for all $\mathcal{X}\subset[N], |\mathcal{X}|=X$,
    \begin{align}
        &I(\mathcal{S}_{\mathcal{X}}; \mathcal{W})\notag\\
        =&I\left(\left\{\widetilde{W}_{m,k}^{(n,i)} \mid n\in\mathcal{X}, i\in[\tau_n^*], m\in\overline{\mathcal{M}}_{(n,i)}, k\in[K_m]\right\}; \{W_{m,k}\}_{m\in[M], k\in[K_m]}\right)\\
        \overset{\eqref{eq:storind}}{=}&I\Big(\Big\{\widetilde{W}_{m,k}^{(n,i)} \mid n\in\mathcal{X}, i\in[\tau_n^*], \notag\\  
        & \hspace{2.5cm}m\in\overline{\mathcal{M}}_{(n,i)}, k\in[K_m]\Big\}; \Big\{W_{m,k}\mid m\in\bigcup_{n\in\mathcal{X}, i\in[\tau^*_n]}\overline{\mathcal{M}}_{(n,i)}, k\in[K_m]\Big\}\Big)\\
        \overset{\eqref{eq:storind}}{\leq}&\sum_{m\in\bigcup_{n\in\mathcal{X}, i\in[\tau^*_n]}\overline{\mathcal{M}}_{(n,i)}} I\left(\left\{\widetilde{W}_{m,k}^{(n,i)} \mid n\in\mathcal{X}, (n,i)\in\overline{\mathcal{R}}_m, k\in[K_m]\right\}; \left\{W_{m,k}\mid k\in[K_m]\right\}\right).\label{eq:proofsec1}
    \end{align}
    Recall that according to Algorithm \ref{alg:genaugsys}, for all $m\in[M]$, the amount of ordered pairs that have the form $(n,*)$ in the set $\overline{\mathcal{R}}_m$ is $\delta_{m,n}\leq \gamma_m$. Therefore, each summand in \eqref{eq:proofsec1} equals zero since $\overline{X}_m=X\gamma_m$. This completes the proof.
\end{proof}
\begin{lemma}
    The rate $R=C_{\infty}(\mathcal{R})$ is achievable for the problem of GXSTPLC.
\end{lemma}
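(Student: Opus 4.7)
My plan is to derive the achievability by invoking Lemma \ref{lemma:asymm} on the augmented system generated by Algorithm \ref{alg:genaugsys}, then merging servers back. First, I would apply Lemma \ref{lemma:asymm} to the augmented Asymm-GXSTPLC instance $(\overline{N},\overline{\mathbf{X}},\overline{\mathbf{T}},\overline{\mathcal{R}})$, producing a scheme with message length $L' = \min_{m\in[M]}(\overline{\rho}_m-\overline{X}_m-\overline{T}_m)^+$ and total download $\overline{N}$. The preceding $X$-security/$T$-privacy lemma shows that merging servers $(n,1),\ldots,(n,\tau_n^*)$ into Server $n$ yields a valid GXSTPLC scheme; since merging changes neither message length nor total download, the resulting GXSTPLC rate is exactly $L'/\overline{N}$.

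The heart of the argument is to show $L'/\overline{N}\geq C_\infty(\mathcal{R})$. For the numerator, I sort $(\tau_n^*)_{n\in\mathcal{R}_m}$ in non-ascending order as in Algorithm \ref{alg:genaugsys}; the top $X+T$ indices satisfy $\tau_n^*\geq\gamma_m$ and each contribute $\gamma_m$ to $\overline{\rho}_m=\sum_{n\in\mathcal{R}_m}\min(\gamma_m,\tau_n^*)$, while the remaining $\rho_m-X-T$ indices (call this set $\mathcal{R}_m'$) contribute $\tau_n^*$ each. Hence
\begin{align}
\overline{\rho}_m-\overline{X}_m-\overline{T}_m=(X+T)\gamma_m+\sum_{n\in\mathcal{R}_m'}\tau_n^*-(X+T)\gamma_m=L\sum_{n\in\mathcal{R}_m'}D_n^*.
\end{align}
Since $\mathcal{R}_m'\subset\mathcal{R}_m$ with $|\mathcal{R}_m'|=|\mathcal{R}_m|-X-T$, feasibility of $(D_n^*)\in\mathcal{D}$ forces $\sum_{n\in\mathcal{R}_m'}D_n^*\geq 1$, so $\overline{\rho}_m-\overline{X}_m-\overline{T}_m\geq L$ for every $m\in[M]$. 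For the denominator, $\overline{N}=\sum_{n\in[N]}\tau_n^*=L\sum_{n\in[N]}D_n^*=L/C_\infty(\mathcal{R})$.

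Combining the two bounds gives $L'/\overline{N}\geq L/\overline{N}=C_\infty(\mathcal{R})$. The converse to Theorem \ref{thm:main} (already established in Section \ref{sec:result}) provides the matching upper bound on any GXSTPLC rate, so equality holds and the constructed scheme achieves $C_\infty(\mathcal{R})$ exactly. The main technical subtlety I anticipate is in the counting identity for $\overline{\rho}_m-\overline{X}_m-\overline{T}_m$: handling possible ties at the pivot value $\gamma_m$, and confirming that the specific subset $\mathcal{R}_m'$ carved out by the non-ascending sort in Algorithm \ref{alg:genaugsys} is genuinely one of the subsets of size $|\mathcal{R}_m|-X-T$ over which the constraint in \eqref{eq:defD} is imposed. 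Once this bookkeeping is carefully laid out, the remainder reduces to the straightforward algebra above, and no additional interference alignment or coding machinery is needed beyond what Lemma \ref{lemma:asymm} already supplies.
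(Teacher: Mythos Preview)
Your proposal is correct and follows essentially the same route as the paper: both apply Lemma \ref{lemma:asymm} to the augmented system from Algorithm \ref{alg:genaugsys}, identify $\overline{\rho}_m-\overline{X}_m-\overline{T}_m$ with the sum of the smallest $\rho_m-X-T$ values among $\{\tau_n^*\}_{n\in\mathcal{R}_m}$, and use feasibility of $(D_n^*)\in\mathcal{D}$ to bound this by $L$, together with $\overline{N}=L\sum_n D_n^*$. Your anticipated tie issue at $\gamma_m$ is a non-problem, since any index with $\tau_n^*=\gamma_m$ contributes $\gamma_m=\tau_n^*$ regardless of whether it is placed in the top $X+T$ block or in $\mathcal{R}_m'$, and the feasibility constraint in \eqref{eq:defD} holds for \emph{every} size-$(\rho_m-X-T)$ subset of $\mathcal{R}_m$, so the particular $\mathcal{R}_m'$ produced by the sort is automatically covered.
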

\begin{proof}
    It suffices to show that the rate $R=C_{\infty}(\mathcal{R})$ is achievable for the augmented system since the GXSTPLC scheme constructed from the augmented system achieves the same rate. According to Algorithm \ref{alg:genaugsys}, it is easy to see that $\overline{\rho}_m=|\overline{\mathcal{R}}_m|=(X+T)\gamma_m+\nu_m$ where $\nu_m$ is the summation of the smallest $(\rho_m-X-T)$ elements in $\{\tau^*_n\}_{n\in\mathcal{R}_m}$. Note that by the definition of $\tau_n^*, n\in[N]$, for arbitrary set $\mathcal{R}'_m\subset\mathcal{R}_m$ such that $|\mathcal{R}'_m|=\rho_m-X-T$, we have $\sum_{n\in\mathcal{R}'_m}\tau_n^*=L\sum_{n\in\mathcal{R}'_m}q_n/p_n\geq L$ since $(q_1/p_1, q_2/p_2, \cdots, q_n/p_n)$ must lie in the feasible region of the linear programming \eqref{eq:lp}. Therefore, we conclude that $\nu_m\geq L$ for all $m\in[M]$, and hence $\min_{m\in[M]}(\overline{\rho}_m-\overline{X}_m-\overline{T}_m)= \min_{m\in[M]}\nu_m\geq L$. Revoking Lemma \ref{lemma:asymm} completes the proof since the rate $R=L/\overline{N}=L/\sum_{n\in{N}}\tau^*_n=\left(\sum_{n\in[N]}(q_n/p_n)\right)^{-1}=C_{\infty}(\mathcal{R})$ is achievable.
\end{proof}
Let us further elaborate via the following examples.
\subsubsection{Example 1}
Consider a motivating example where we have $N=6$ servers and $K$ messages, $X=1$ and $T=1$. The $K$ messages are partitioned into two disjoint sets $\mathcal{W}_1$, and $\mathcal{W}_2$. The storage pattern for this example is as follows.
\begin{subequations}
    \begin{align}
        \mathcal{R}_1&=\{1,2,3,5\},\\
        \mathcal{R}_2&=\{3,4,6\}.
    \end{align}
\end{subequations}
According to Theorem \ref{thm:main}, we have the following bounds $D_3\geq 1$, $D_4\geq 1$, $D_6\geq 1$, $D_1+D_2\geq 1$, $D_2+D_5\geq 1$, $D_1+D_5\geq 1$. Therefore, $\sum_{n\in[6]}D_n=D_3+D_4+D_6+\frac{1}{2}(D_1+D_2+D_2+D_5+D_1+D_5)\geq 9/2$, where the equality holds when $(D_1,D_2,\cdots,D_6)=(1/2, 1/2, 1, 1, 1/2, 1)$. Following the notations above, let us set $(\tau_1^*, \tau_2^*, \cdots, \tau_6^*)=(1,1,2,2,1,2)$, $L=2$ and $\overline{N}=9$. The $9$ servers in the augmented system are listed as $((1,1),(2,1), (3,1),(3,2), (4,1),(4,2), (5,1), (6,1), (6,2))$.

Now let us generate the augmented system according to Algorithm \ref{alg:genaugsys}. Specifically, for the augmented system we have $\overline{X}_1=\overline{T}_1=1$ and $\overline{X}_2=\overline{T}_2=2$, and the storage pattern is shown in Table \ref{table:ex1}.

\begin{table}[h]
    \caption{Storage pattern of the augmented system in Example 1}
    \begin{center}
        \begin{tabular}{cccccccccc}%
            \hline \textbf{Server}&$(1,1)$&$(2,1)$&$(3,1)$&$(3,2)$&$(4,1)$&$(4,2)$&$(5,1)$&$(6,1)$&$(6,2)$\\\hline 
            \multirow{2}{*}{$\overline{\mathcal{M}}_{(n,i)}$}&$1$&$1$&$1$&$2$&$2$&$2$&$1$&$2$&$2$\\
             & & &$2$& & & & & \\\hline 
        \end{tabular}
        \label{table:ex1}
    \end{center}
\end{table}
This is exactly the setting presented in the motivating example in Section \ref{sec:asymmex2} and the rate $R=2/9$ is achievable for the augmented system which matches the converse bound of the corresponding GXSTPLC setting. Now let us see why is the resulting GXSTPLC scheme $X=1$-secure and $T=1$-private. Note that
\begin{align}
    I(\mathcal{S}_1; \mathcal{W})=&I\left(\left\{\widetilde{W}_{1,k}^{(1,1)}\right\}_{k\in[K_1]}; \{W_{1,k}\}_{k\in[K_m]}\right)=0,\label{eq:exxs1}\\
    I(Q^{[\boldsymbol{\Lambda}]}_1; \boldsymbol{\Lambda})
    =&I\left(
    \left\{\widetilde{\lambda}_{1,k}^{(1,1)}\right\}_{k\in[K_1]}; 
    \left\{\lambda_{1,k,(\ell)}\right\}_{k\in[K_1],\ell\in[2]}
    \right)=0,\\
    I(\mathcal{S}_2; \mathcal{W})
    =&I\left(
    \left\{\widetilde{W}_{1,k}^{(2,1)}\right\}_{k\in[K_1]}; 
    \{W_{1,k}\}_{k\in[K_1]}
    \right)=0,\\
    I(Q^{[\boldsymbol{\Lambda}]}_2; \boldsymbol{\Lambda})
    =&I\left(
    \left\{\widetilde{\lambda}_{1,k}^{(2,1)}\right\}_{k\in[K_1]}; 
    \{\lambda_{1,k,(\ell)}\}_{k\in[K_1],\ell\in[2]}
    \right)=0,\\
    I(\mathcal{S}_3; \mathcal{W})
    =&I\left(
    \left\{\widetilde{W}_{1,k}^{(3,1)}\right\}_{k\in[K_1]},
    \left\{\widetilde{W}_{2,k}^{(3,1)},\widetilde{W}_{2,k}^{(3,2)}\right\}_{k\in[K_2]}; 
    \{W_{m,k}\}_{m\in[2],k\in[K_m]}
    \right)\\
    \le & I\left(
    \left\{\widetilde{W}_{1,k}^{(3,1)}\right\}_{k\in[K_1]};
    \{W_{1,k}\}_{k\in[K_1]}\right)\notag\\
    &+
    I\left(
    \left\{\widetilde{W}_{2,k}^{(3,1)},\widetilde{W}_{2,k}^{(3,2)}\right\}_{k\in[K_2]}; 
    \{W_{2,k}\}_{k\in[K_2]}
    \right)=0,\\
    I(Q^{[\boldsymbol{\Lambda}]}_3; \boldsymbol{\Lambda})
    =&I\left(
    \left\{\widetilde{\lambda}_{1,k}^{(3,1)}\right\}_{k\in[K_1]},
    \left\{\widetilde{\lambda}_{2,k}^{(3,1)},\widetilde{\lambda}_{2,k}^{(3,2)}\right\}_{k\in[K_2]}; 
    \{\lambda_{m,k,(\ell)}\}_{m\in[2],k\in[K_m],\ell\in[2]}
    \right)\\
    \le &I\left(
    \left\{\widetilde{\lambda}_{1,k}^{(3,1)}\right\}_{k\in[K_1]};
    \{\lambda_{1,k,(\ell)}\}_{k\in[K_1],\ell\in[2]}\right)\notag\\
    &+
    I\left(
    \left\{\widetilde{\lambda}_{2,k}^{(3,1)},\widetilde{\lambda}_{2,k}^{(3,2)}\right\}_{k\in[K_2]}; 
    \{\lambda_{2,k,(\ell)}\}_{k\in[K_2],\ell\in[2]}
    \right)=0,\\
    I(\mathcal{S}_4; \mathcal{W})
    =&I\left(
    \left\{\widetilde{W}_{2,k}^{(4,1)},\widetilde{W}_{2,k}^{(4,2)}\right\}_{k\in[K_2]};
    \{W_{2,k}\}_{k\in[K_2]}
    \right)=0,\\
    I(Q^{[\boldsymbol{\Lambda}]}_3; \boldsymbol{\Lambda})
    =&I\left(
    \left\{\widetilde{\lambda}_{2,k}^{(4,1)},\widetilde{\lambda}_{2,k}^{(4,2)}\right\}_{k\in[K_2]}; 
    \{\lambda_{2,k,(\ell)}\}_{k\in[K_2],\ell\in[2]}
    \right)=0,\\
    I(\mathcal{S}_5; \mathcal{W})=&I\left(\left\{\widetilde{W}_{1,k}^{(5,1)}\right\}_{k\in[K_1]}; \{W_{1,k}\}_{k\in[K_1]}\right)=0,\\
    I(Q^{[\boldsymbol{\Lambda}]}_5; \boldsymbol{\Lambda})
    =&I\left(
    \left\{\widetilde{\lambda}_{1,k}^{(5,1)}\right\}_{k\in[K_1]}; 
    \left\{\lambda_{1,k,(\ell)}\right\}_{k\in[K_1],\ell\in[2]}
    \right)=0,\\
    I(\mathcal{S}_6; \mathcal{W})
    =&I\left(
    \left\{\widetilde{W}_{2,k}^{(6,1)},\widetilde{W}_{2,k}^{(6,2)}\right\}_{k\in[K_2]};
    \{W_{2,k}\}_{k\in[K_2]}
    \right)=0,\\
    I(Q^{[\boldsymbol{\Lambda}]}_3; \boldsymbol{\Lambda})
    =&I\left(
    \left\{\widetilde{\lambda}_{2,k}^{(6,1)},\widetilde{\lambda}_{2,k}^{(6,2)}\right\}_{k\in[K_2]}; 
    \{\lambda_{2,k,(\ell)}\}_{k\in[K_2],\ell\in[2]}
    \right)=0.
\end{align}
where the last equalities in the above equations hold due to $\overline{X}_1=\overline{T}_1=1$ and $\overline{X}_2=\overline{T}_2=2$ for the augmented system. Hence the resulting GXSTPLC scheme $X=1$-secure and $T=1$-private.

\subsubsection{Example 2}
Consider a (more sophisticated) example where we have $N=14$ servers and $K$ messages, and similarly, let us set $X=1$ and $T=1$. The $K$ messages are partitioned into four disjoint sets $\mathcal{W}_1$, $\mathcal{W}_2$, $\mathcal{W}_3$, and $\mathcal{W}_4$. The storage pattern for this example is as follows.
\begin{subequations}
    \begin{align}
        \mathcal{R}_1=&\{1,4,7,9\},\\
        \mathcal{R}_2=&\{1,3,4,5,8\},\\
        \mathcal{R}_3=&\{3,4,6,8,10,13\},\\
        \mathcal{R}_4=&\{2,6,10,11,12,13,14\}.
    \end{align}
\end{subequations}
Note that the following bounds hold due to Theorem \ref{thm:main}.
\begin{subequations}
    \begin{align}
        D_1+D_4\geq& 1,~ D_7+D_9\geq 1,~ D_3+D_5+D_8\geq 1,\\
        \sum_{n\in\mathcal{R}'}D_n\geq& 1, ~\forall \mathcal{R}'\subset\mathcal{R}_4, |\mathcal{R}'|=5.
    \end{align}
\end{subequations}
Therefore, $\sum_{n\in[14]}D_n=D_1+D_4+D_7+D_9+D_3+D_5+D_8+\frac{1}{15}\sum_{\mathcal{R}'\subset\mathcal{R}_4, |\mathcal{R}'|=5}\sum_{n\in\mathcal{R}'}D_n\geq 44/10$, where the equality holds when $(D_1,D_2,\cdots,D_{14})=(1/2, 1/5, 2/5, 1/2, 1/10, 1/5, 1/2, 1/2, 1/2, 1/5, 1/5, 1/5, 1/5, 1/5)$. Following the notations above, let us set $(\tau_1^*, \tau_2^*, \cdots, \tau_{14}^*)=(5, 2, 4, 5, 1, 2, 5, 5, 5, 2, 2, 2, 2, 2)$, $L=10$ and $\overline{N}=44$. The $44$ servers in the augmented system are listed as $((1,1),(1,2),(1,3),(1,4),(1,5),\cdots,(14,1),(14,2))$. According to Algorithm \ref{alg:genaugsys}, the generated augmented system is as follows: $\overline{X}_1=\overline{T}_1=5$, $\overline{X}_2=\overline{T}_2=5$, $\overline{X}_3=\overline{T}_3=4$, $\overline{X}_4=\overline{T}_4=2$, and the storage pattern is shown in Table \ref{tb:ex2}.
\begin{table}[!h]
    \caption{Storage pattern of the augmented system in Example 2}
    \begin{center}
        \begin{tabular}{ccccccccccc}%
            \hline \textbf{Server}&
            (1,1)&(1,2)&(1,3)&(1,4)&(1,5)&(2,1)&(2,2)&(3,1)&(3,2)&(3,3)\\\hline 
            \multirow{2}{*}{$\overline{\mathcal{M}}_{(n,i)}$}&
             1&1&1&1&1&4&4&2&2&2\\
            &2&2&2&2&2& & &3&3&3\\
            \hline \textbf{Server}&
            (3,4)&(4,1)&(4,2)&(4,3)&(4,4)&(4,5)&(5,1)&(6,1)&(6,2)&(7,1)\\\hline 
            \multirow{3}{*}{$\overline{\mathcal{M}}_{(n,i)}$}& 
             2&1&1&1&1&1&2&3&3&1\\
            &3&2&2&2&2&2& &4&4&\\
            & &3&3&3&3& & & & &\\
            \hline \textbf{Server}&
            (7,2)&(7,3)&(7,4)&(7,5)&(8,1)&(8,2)&(8,3)&(8,4)&(8,5)&(9,1)\\\hline 
            \multirow{2}{*}{$\overline{\mathcal{M}}_{(n,i)}$}&
             1&1&1&1&2&2&2&2&2&1\\
            & & & & &3&3&3&3& &\\
            \hline \textbf{Server}&
            (9,2)&(9,3)&(9,4)&(9,5)&(10,1)&(10,2)&(11,1)&(11,2)&(12,1)&(12,2)\\\hline 
            \multirow{2}{*}{$\overline{\mathcal{M}}_{(n,i)}$}&
             1&1&1&1&3&3&4&4&4&4\\
            & & & & &4&4& & & & \\
            \hline \textbf{Server}&
            (13,1)&(13,2)&(14,1)&(14,2)&&&&&&\\\hline
            \multirow{2}{*}{$\overline{\mathcal{M}}_{(n,i)}$}&
             3&3&4&4&&&&&&\\
            &4&4& & &&&&&& \\\hline
        \end{tabular}
        \label{tb:ex2}
    \end{center}
\end{table}
According to Lemma \ref{lemma:asymm}, due to the fact that $\overline{\rho}_1=|\overline{\mathcal{R}}_1|=20$, $\overline{\rho}_2=|\overline{\mathcal{R}}_2|=20$, $\overline{\rho}_3=|\overline{\mathcal{R}}_3|=18$, $\overline{\rho}_4=|\overline{\mathcal{R}}_4|=14$, the rate $R=10/44=5/22$ is achievable for the augmented system, which matches the converse bound of the corresponding GXSTPLC setting. Similarly, it is straightforward to check that the GXSTPLC scheme is $X=1$-secure and $T=1$-private. For example, consider the first server, we have
\begin{align}
    I(\mathcal{S}_1; \mathcal{W})
    =&I\Big(
    \left\{\widetilde{W}_{1,k}^{(1,1)},\widetilde{W}_{1,k}^{(1,2)},\widetilde{W}_{1,k}^{(1,3)},\widetilde{W}_{1,k}^{(1,4)},\widetilde{W}_{1,k}^{(1,5)}\right\}_{k\in[K_1]},\notag\\
    &\hspace{1cm}\left\{\widetilde{W}_{2,k}^{(1,1)},\widetilde{W}_{2,k}^{(1,2)},\widetilde{W}_{2,k}^{(1,3)}\widetilde{W}_{2,k}^{(1,4)}\widetilde{W}_{2,k}^{(1,5)}\right\}_{k\in[K_2]}; 
    \{W_{m,k}\}_{m\in[2],k\in[K_m]}
    \Big)\\
    \le & I\left(
    \left\{\widetilde{W}_{1,k}^{(1,1)},\widetilde{W}_{1,k}^{(1,2)},\widetilde{W}_{1,k}^{(1,3)},\widetilde{W}_{1,k}^{(1,4)},\widetilde{W}_{1,k}^{(1,5)}\right\}_{k\in[K_1]};
    \{W_{1,k}\}_{k\in[K_1]}\right)\notag\\
    &+
    I\left(
    \left\{\widetilde{W}_{2,k}^{(1,1)},\widetilde{W}_{2,k}^{(1,2)},\widetilde{W}_{2,k}^{(1,3)}\widetilde{W}_{2,k}^{(1,4)}\widetilde{W}_{2,k}^{(1,5)}\right\}_{k\in[K_2]}; 
    \{W_{2,k}\}_{k\in[K_2]}
    \right)=0,\\
    I(Q^{[\boldsymbol{\Lambda}]}_1; \boldsymbol{\Lambda})
    =&I\left(
    \left\{\widetilde{\lambda}_{m,k}^{(1,1)},\widetilde{\lambda}_{m,k}^{(1,2)},\widetilde{\lambda}_{m,k}^{(1,3)},\widetilde{\lambda}_{m,k}^{(1,4)},\widetilde{\lambda}_{m,k}^{(1,5)}\right\}_{m\in[2],k\in[K_m]}; 
    \{\lambda_{m,k,(\ell)}\}_{m\in[2],k\in[K_m],\ell\in[10]}
    \right)\\
    \le &I\left(
    \left\{\widetilde{\lambda}_{1,k}^{(1,1)},\widetilde{\lambda}_{1,k}^{(1,2)},\widetilde{\lambda}_{1,k}^{(1,3)},\widetilde{\lambda}_{1,k}^{(1,4)},\widetilde{\lambda}_{1,k}^{(1,5)}\right\}_{k\in[K_1]};
    \{\lambda_{1,k,(\ell)}\}_{k\in[K_1],\ell\in[10]}\right)\notag\\
    &+
    I\left(
    \left\{\widetilde{\lambda}_{2,k}^{(1,1)},\widetilde{\lambda}_{2,k}^{(1,2)},\widetilde{\lambda}_{2,k}^{(1,3)},\widetilde{\lambda}_{2,k}^{(1,4)},\widetilde{\lambda}_{2,k}^{(1,5)}\right\}_{k\in[K_2]}; 
    \{\lambda_{2,k,(\ell)}\}_{k\in[K_2],\ell\in[10]}
    \right)=0.
\end{align}
where the last equalities in the above two equations hold due to the fact that $\overline{X}_1=\overline{T}_1=\overline{X}_2=\overline{T}_2=5$ for the augmented system. The rest independencies can be similarly verified.

\begin{remark}\label{remark:gxslc}
    Our result also settles the exact (i.e., non-asymptotic) capacity of GXSLC, namely, the special setting of GXSTPLC where $T=0, X\geq 1$ and $\lambda_{m,k,(\ell)}\neq 0$ for all $m\in[M], k\in[K_m], \ell\in[L]$. Specifically, the capacity is the same as Theorem \ref{thm:main} (set $T=0$) even for finite $K$. It is obvious that the achievability scheme follows directly by setting $T=0$, therefore let us briefly present the converse argument as the converse for Theorem \ref{thm:main} applies only to $T\geq 1$. For any GXSLC scheme and arbitrary $m\in[M]$, it is possible to set $W_{m',k}=\textbf{0}$ for all $m'\in[M]\setminus\{m\}, k\in[K_{m'}]$ and let $W_{m,k}$ be i.i.d. uniform. Besides, let us give $\{\widetilde{W}_{m',k}^{(n)}\}_{m'\in[M]\setminus\{m\}, k\in[K_{m'}], n\in\mathcal{R}_{m'}}$ to the user for free. Denoting the expected number of downloaded $q$-ary symbols from Server $n$ of the GXSLC scheme as $D_n$, and the expected number of downloaded $q$-ary symbols from Server $n$ of the resulting scheme as $D'_n$, we have $D'_n=D_n$ for all $n\in\mathcal{R}_m$ and $D'_n=0$ otherwise. This is because conditioning on messages taking values over a particular subset of the original alphabet cannot change $D_{[N]}$ since we have non-trivial security constraint $X\geq 1$. Besides, as shares of messages $\{W_{m',k}\}_{m'\in[M]\setminus\{m\}, k\in[K_{m'}]}$ are given to the user for free, it is not necessary to download anything from Server $n, n\in[N]\setminus\mathcal{R}_{m}$ as the storage at these servers is now available at the user. Note that the resulting scheme is actually an $X$-secure linear computation scheme (without graph based replicated storage) for which the following standard secret sharing bounds must hold (see, e.g., \cite{Jia_Jafar_SDBMM}).
    \begin{align}
        \sum_{n\in\mathcal{R}'_m} D'_n = \sum_{n\in\mathcal{R}'_m} D_n \geq 1, ~~\forall \mathcal{R}'_{m}\subset\mathcal{R}_m, |\mathcal{R}'_{m}|=|\mathcal{R}_m|-X.
    \end{align}
    The desired converse bound is hence obtained by applying the above argument to all $m\in[M]$.
\end{remark}

\section{Conclusion}\label{sec:conclusion}
We explored the problem of $X$-secure $T$-private linear computation with graph based replicated storage (GXSTPLC) and completely characterized its asymptotic capacity, which also settles a previous conjecture on the asymptotic capacity of $X$-secure $T$-private information retrieval with graph based replicated storage (GXSTPIR) in \cite{Jia_Jafar_GXSTPIR}. The result is timely due to emerging distributed storage/computation applications where security, privacy and heterogeneous data availability concerns come into play. From a theoretical standpoint, our result reveals the interesting interactions among these competing factors under the umbrella of private linear computation/information retrieval that may lead to a profound understanding of a set of fundamental problems such as private information retrieval/computation, secret sharing, graph based replicated storage as well as coded storage. While our result completely settles the asymptotic capacity of GXSTPLC, it is worth noting that our scheme indeed requires extra storage overhead (compared to the amount of storage required by simply replicating the messages according to the storage pattern) when $X\geq 1$\footnote{When $X=0$, our scheme does not require extra storage overhead even if we construct the GXSTPLC scheme by merging servers in the augmented system. This is because when $X=0$, the messages are directly replicated at the corresponding servers without any secret sharing/coding, hence no more storage is necessary beyond trivial replications.}. Whether or not this extra storage overhead is necessary for asymptotic capacity achieving schemes is perhaps one of the most important open problems for future work which calls for potentially new GXSTPLC schemes, storage/query designs and/or converse arguments. Also, it is promising to explore applications of ideas in this work to non-asymptotic settings.

\section*{Acknowledgments}
The authors would like to thank Professor Syed A. Jafar at UC Irvine for his insightful comments in relation to this work.

\bibliography{Thesis}
\bibliographystyle{IEEEtran}

\appendix

The following two important lemmas are used in the proof of Lemma \ref{lemma:asymm}. In particular, Lemma \ref{lemma:grs} presents a structure inspired by dual GRS codes due to Jia et al. in \cite{Jia_Jafar_GXSTPIR}, and Lemma \ref{lemma:vdc} shows a Vandermonde decomposition of Cauchy matrices due to Gow in \cite{gow1992cauchy}.
\begin{lemma}\label{lemma:grs}
(\cite[Lemma 2]{Jia_Jafar_GXSTPIR})
Let $\alpha_1,\cdots,\alpha_n$ be distinct elements from a finite field $\mathbb{F}_q$. Define $v_1,v_2,\cdots,v_n$ as follows.
\begin{align}
    v_i=\prod_{j\in[N]\setminus\{i\}}\left(\alpha_i-\alpha_j\right)^{-1},\  i\in[n].\label{eq:defv}
\end{align}
Then the following identity is satisfied.
\begin{align}
    \sum_{i\in[n]}v_i\alpha^j_i=0,\ \forall j\in\{0,1,\ldots,n-2\}.
\end{align}
\end{lemma}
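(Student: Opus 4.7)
The plan is to prove Lemma~\ref{lemma:grs} via Lagrange interpolation. The key observation is that $v_i$ as defined in \eqref{eq:defv} is exactly the leading coefficient (i.e., the coefficient of $x^{n-1}$) of the $i^{th}$ Lagrange basis polynomial
\begin{align}
\ell_i(x) \triangleq \prod_{k \in [n]\setminus\{i\}} \frac{x - \alpha_k}{\alpha_i - \alpha_k},
\end{align}
because its numerator is a monic polynomial of degree $n-1$ in $x$ while its denominator equals $v_i^{-1}$.

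Next I would invoke the uniqueness of polynomial interpolation. For any $j\in\{0,1,\ldots,n-1\}$, the polynomial $x^j$ has degree at most $n-1$, so interpolating it at the $n$ distinct points $\alpha_1, \ldots, \alpha_n$ and appealing to uniqueness yields
\begin{align}
x^j = \sum_{i=1}^n \alpha_i^j \, \ell_i(x).
\end{align}
Matching coefficients of $x^{n-1}$ on both sides, the left-hand side contributes $0$ for $j \in \{0,1,\ldots,n-2\}$ and $1$ for $j=n-1$, whereas the right-hand side contributes $\sum_{i=1}^n v_i\alpha_i^j$. This immediately yields the claimed identity.

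I do not anticipate any real obstacle: the argument is purely algebraic, and the only required hypothesis, distinctness of the $\alpha_i$, is exactly what makes the denominators in $v_i$ and in $\ell_i$ non-zero; in particular, finiteness of $\mathbb{F}_q$ plays no role. As an equivalent alternative route, one may apply partial fractions to $x^j/\prod_{i\in[n]}(x-\alpha_i)$: the residue at $x=\alpha_i$ is $v_i\alpha_i^j$, and since the rational function decays at least as fast as $x^{-2}$ at infinity whenever $j\le n-2$, the $1/x$ coefficient of the expansion, namely $\sum_i v_i\alpha_i^j$, must vanish. Either route recovers the dual GRS parity-check identity already used in \cite{Jia_Jafar_GXSTPIR}.
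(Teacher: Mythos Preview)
Your argument is correct: identifying $v_i$ as the leading coefficient of the Lagrange basis polynomial $\ell_i(x)$ and then reading off the coefficient of $x^{n-1}$ in the interpolation identity $x^j=\sum_i\alpha_i^j\ell_i(x)$ immediately gives the claim for $j\le n-2$. The partial-fractions variant you mention is also fine.

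As for comparison, the paper does \emph{not} supply its own proof of Lemma~\ref{lemma:grs}; it merely quotes the identity from \cite[Lemma~2]{Jia_Jafar_GXSTPIR}. So there is nothing in the present paper to compare against directly. That said, your Lagrange-basis idea is exactly the mechanism the paper \emph{does} employ in its proof of the companion Lemma~\ref{lemma:vdc}: the polynomials $p_i(x)=\prod_{k\neq i}(x-\alpha_k)$ in~\eqref{eq:piBasis} are precisely your $\ell_i(x)$ multiplied by $v_i^{-1}$, and the relation $\mathbf{D}_v=\mathbf{A}\mathbf{V}_\alpha$ there encodes the same interpolation identity you use. In that sense your proposal is fully consistent with---indeed essentially a special case of---the algebraic machinery already present in the appendix.
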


\begin{lemma}\label{lemma:vdc}(\cite[Theorem 1]{gow1992cauchy})
    Let $\alpha_1,\alpha_2,\cdots,\alpha_n$ and $f_1,f_2,\cdots,f_\ell$ be distinct elements from a finite field $\mathbb{F}_q$, $n\ge \ell$. Define the following matrices.
    \begin{align}
        \mathbf{C}&=
        \begin{bmatrix}
            \frac{1}{\alpha_1-f_1} & \frac{1}{\alpha_1-f_2} & \cdots & \frac{1}{\alpha_1-f_\ell}\\
            \frac{1}{\alpha_2-f_1} & \frac{1}{\alpha_2-f_2} & \cdots & \frac{1}{\alpha_2-f_\ell}\\
            \vdots & \vdots & \cdots & \vdots\\
            \frac{1}{\alpha_n-f_1} & \frac{1}{\alpha_n-f_2} & \cdots & \frac{1}{\alpha_n-f_\ell}
        \end{bmatrix},&
        \mathbf{V}_\alpha&=
        \begin{bmatrix}
            1 & 1 & \cdots & 1\\
            \alpha_1 & \alpha_2 & \cdots & \alpha_n\\
            \vdots & \vdots & \cdots & \vdots\\
            \alpha_1^{n-1} & \alpha_2^{n-1} & \cdots & \alpha_n^{n-1}\\
        \end{bmatrix},\\
        \mathbf{V}_f&=
        \begin{bmatrix}
            1 & 1 & \cdots & 1\\
            f_1 & f_2 & \cdots & f_\ell\\
            \vdots & \vdots & \cdots & \vdots\\
            f_1^{n-1} & f_2^{n-1} & \cdots & f_\ell^{n-1}\\
        \end{bmatrix},\\
        \mathbf{D}_v&=\textrm{diag}(v_1,v_2,\cdots,v_n),\\ 
        \mathbf{D}_u&=\textrm{diag}(u_1,u_2,\cdots,u_\ell),
    \end{align}
    where $v_1,v_2,\cdots,v_n$ and $u_1,u_2,\cdots,u_\ell$ are defined as follows.
    \begin{align}
        v_i=&\prod_{k\in[n]\setminus\{i\}}\left(\alpha_i-\alpha_k\right),\ i\in[n]\\
        u_j=&\prod_{i\in[n]}\left(f_j-\alpha_i\right),\ j\in[\ell].
    \end{align}
    Then we have the following decomposition of the Cauchy matrix $\mathbf{C}$ in terms of (scaled) Vandermonde matrices.
    \begin{align}
        \mathbf{C}=-\mathbf{D}_v\mathbf{V}_\alpha^{-1}\mathbf{V}_f\mathbf{D}_u^{-1}.
    \end{align}
\end{lemma}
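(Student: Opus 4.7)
The plan is to reduce the matrix identity to a scalar partial-fraction calculation. Since $\mathbf{V}_\alpha$ is an $n\times n$ Vandermonde matrix with pairwise distinct nodes $\alpha_1,\ldots,\alpha_n$, it is invertible; the diagonal matrices $\mathbf{D}_v$ and $\mathbf{D}_u$ are invertible because the $\alpha_i$ are pairwise distinct and each $f_j$ is distinct from every $\alpha_i$. Thus the claimed identity $\mathbf{C}=-\mathbf{D}_v\mathbf{V}_\alpha^{-1}\mathbf{V}_f\mathbf{D}_u^{-1}$ is equivalent to the ``clear-denominators'' form
\begin{align*}
\mathbf{V}_\alpha\,\mathbf{D}_v^{-1}\,\mathbf{C}\,\mathbf{D}_u = -\mathbf{V}_f,
\end{align*}
an identity of $n\times\ell$ matrices that I can verify entry-by-entry.

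Unrolling the matrix products, the $(r,j)$-entry of the left-hand side is $u_j\sum_{i=1}^n \alpha_i^{r-1}/[v_i\,(\alpha_i-f_j)]$, while the $(r,j)$-entry of the right-hand side is $-f_j^{r-1}$. So the task reduces to establishing the scalar identity
\begin{align*}
\sum_{i=1}^n \frac{\alpha_i^{r-1}}{v_i\,(\alpha_i-f_j)} \;=\; -\frac{f_j^{r-1}}{u_j}, \qquad r\in[n],\ j\in[\ell].
\end{align*}
This is precisely the partial-fraction expansion of $g(x)\triangleq x^{r-1}/\prod_{i=1}^n (x-\alpha_i)$ evaluated at $x=f_j$: because $r-1\le n-1$, $g$ is a proper rational function with no polynomial part, and the cover-up rule gives residues $c_i=\alpha_i^{r-1}/v_i$, so $g(x)=\sum_{i=1}^n c_i/(x-\alpha_i)$. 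Setting $x=f_j$ and using $g(f_j)=f_j^{r-1}/u_j$ yields $f_j^{r-1}/u_j=\sum_i \alpha_i^{r-1}/[v_i(f_j-\alpha_i)]$, which becomes the target equality after flipping the sign on each denominator.

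There is essentially no obstacle once the identity has been rearranged into the cleared form; the entire content is a one-line application of partial fractions. The only subtlety worth double-checking is the strict inequality $r-1<n$ that ensures $g$ has no polynomial part: this is automatic because the rows of $\mathbf{V}_\alpha$ are indexed by powers $0,1,\ldots,n-1$. Everything else follows from invertibility of Vandermonde matrices with distinct nodes and of the two diagonal scalings, so no additional hypothesis on $\mathbb{F}_q$ (beyond those already in the statement) is needed.
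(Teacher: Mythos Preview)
Your proof is correct. The partial-fraction identity you invoke is valid over any field (it is just Lagrange interpolation of the polynomial $x^{r-1}$ of degree $r-1<n$ at the distinct nodes $\alpha_1,\ldots,\alpha_n$), so there is no issue working in $\mathbb{F}_q$.

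The paper's proof takes a slightly different route. Instead of clearing to $\mathbf{V}_\alpha\mathbf{D}_v^{-1}\mathbf{C}\mathbf{D}_u=-\mathbf{V}_f$ and checking entries, it introduces the (scaled) Lagrange basis polynomials $p_i(x)=\prod_{k\neq i}(x-\alpha_k)=\sum_j a_{ij}x^{j-1}$ and their coefficient matrix $\mathbf{A}=[a_{ij}]$. Evaluating $p_i$ at the $\alpha_j$ yields $\mathbf{D}_v=\mathbf{A}\mathbf{V}_\alpha$, and evaluating at the $f_j$ yields $-\mathbf{C}\mathbf{D}_u=\mathbf{A}\mathbf{V}_f$; eliminating $\mathbf{A}$ gives the decomposition directly. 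Your scalar identity $\sum_i \alpha_i^{r-1}/[v_i(\alpha_i-f_j)]=-f_j^{r-1}/u_j$ is exactly the equation $x^{r-1}=\sum_i(\alpha_i^{r-1}/v_i)\,p_i(x)$ evaluated at $x=f_j$, so the two arguments are dual views of the same Lagrange interpolation. The paper's packaging keeps the computation at the matrix level and avoids having to recognise the partial-fraction step; your approach is more explicit about where the degree condition $r-1<n$ enters. Both are equally short and require nothing beyond the stated hypotheses.
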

\begin{proof}
    
    While the lemma is proved in \cite{gow1992cauchy}, here we present an alternative shorter proof. Let us define the following polynomials
    \begin{align}
        p_i(x)=\frac{\prod_{k\in[n]}(x-\alpha_k)}{x-\alpha_i}=\sum_{j\in[n]}a_{ij}x^{j-1}, ~\forall i\in[n], \label{eq:piBasis}
    \end{align}
    as well as the coefficient matrix 
    \begin{align}
        \mathbf{A}=\begin{bmatrix}
            a_{11} & a_{12} & \cdots & a_{1n}\\ 
            a_{21} & a_{22} & \cdots & a_{2n}\\ 
            \vdots & \vdots & \cdots & \vdots\\
            a_{n1} & a_{n2} & \cdots & a_{nn}
        \end{bmatrix}.
    \end{align}
    It is hence easy to check that the following two identities hold.
    \begin{align}
        \mathbf{D}_v=&
        \begin{bmatrix}
            p_1(\alpha_1) & p_1(\alpha_2) & \cdots & p_1(\alpha_n)\\
            p_2(\alpha_1) & p_2(\alpha_2) & \cdots & p_2(\alpha_n)\\
            \vdots & \vdots & \cdots & \vdots\\
            p_n(\alpha_1) & p_n(\alpha_2) & \cdots & p_n(\alpha_n)
        \end{bmatrix}=\mathbf{A}\mathbf{V}_\alpha,\\
        -\mathbf{C}\mathbf{D}_u=&
        \begin{bmatrix}
            p_1(f_1) & p_1(f_2) & \cdots & p_1(f_\ell)\\
            p_2(f_1) & p_2(f_2) & \cdots & p_2(f_\ell)\\
            \vdots & \vdots & \cdots & \vdots\\
            p_n(f_1) & p_n(f_2) & \cdots & p_n(f_\ell)
        \end{bmatrix}=\mathbf{A}\mathbf{V}_f.
    \end{align}
    Therefore, since $\mathbf{V}_{\alpha}$ (a Vandermonde matrix with distinct nodes) and $\mathbf{D}_u$ (a diagonal matrix with non-zero entries) are invertible, we have
    \begin{align}
        -\mathbf{C}\mathbf{D}_u&=\mathbf{D}_v\mathbf{V}_{\alpha}^{-1}\mathbf{V}_f\\
        \Rightarrow \mathbf{C}&=- \mathbf{D}_v\mathbf{V}_\alpha^{-1}\mathbf{V}_f \mathbf{D}_u^{-1}.
    \end{align}
    This completes the proof of Lemma \ref{lemma:vdc}.
    \end{proof}

\end{document}